\documentclass[12pt,a4paper, left=0in, right=0in, top=0in, bottom=0in, margin=0.5in]{article}
\usepackage{geometry} 

\usepackage{titlesec}
\usepackage{setspace}
\usepackage{lipsum}
\usepackage{booktabs}
\usepackage{natbib}
\usepackage[ruled]{algorithm2e}
\usepackage{amssymb}
\usepackage{amsmath}
\usepackage{amsthm}
\usepackage{array}
\usepackage{float}
\usepackage{mdframed}
\usepackage{appendix}
\usepackage{multirow}
\usepackage{graphicx}
\usepackage{soul}
\usepackage{booktabs}
\usepackage[hyperindex,breaklinks]{hyperref}
\hypersetup{colorlinks = true, urlcolor = green, linkcolor = blue, citecolor = blue}
\usepackage[bottom]{footmisc}
\usepackage{hyperref}

\usepackage{eso-pic}
\usepackage{color}

\newtheorem{definition}{Definition}
\newtheorem{example}{Example}
\newtheorem{assumption}{Assumption}
\newtheorem{hypothesis}{Hypothesis}
\newtheorem{theorem}{Theorem}
\newtheorem{lemma}{Lemma}
\newtheorem{proposition}{Proposition}
\newtheorem{remark}{Remark}

\theoremstyle{definition}
\newtheorem{proced}{Procedure}

  {\par\noindent\hspace*{0.5em}\begin{minipage}{\textwidth}\ignorespaces}%
  {\end{minipage}\par}

\title{\bfseries Randomization tests for model specification\\ in causal inference under network interference}
\author{Supriya Tiwari\footnote{Email: supriya\_tiwari@isb.edu \\ Pallavi Basu's research is partially supported by the SERB MATRICS award MTR/2022/000073. The author(s) would like to thank the Isaac Newton Institute for Mathematical Sciences, Cambridge, for support and hospitality during the program Causal inference: From theory to practice and back again, where part of the work on this paper was undertaken. This work was supported by EPSRC grant no EP/Z000580/1.}\hspace{0.3cm}and\hspace{0.15cm}Pallavi Basu\\
  \selectfont Indian School of Business, Hyderabad, 500111, India}
\date{\today}

\begin{document}

\maketitle

\abstract{\noindent
Analysis of experimental data becomes challenging when the underlying population is connected by a network. Exposure mapping is a common tool in the literature for defining and estimating spillover effects. These mappings reduce the dimensionality of the estimand, thereby facilitating identifiability. It is assumed that this mapping is correctly specified, leaving the choice of the exposure mapping to the analyst. This makes estimators of the spillover effect, such as the Horvitz-Thompson estimator, vulnerable to bias from model misspecification. Although these estimators have been shown to be robust to certain forms of controlled misspecification, there has been relatively little methodological progress in empirically investigating appropriate exposure mappings. In this paper, we propose a novel design-based model specification framework for causal inference. Building on this, we develop a randomization-testing procedure to assess the correct specification of an exposure-mapping model in the presence of network interference. We provide theoretical guarantees for the asymptotic validity of the proposed testing procedure. We establish the favorable power properties of our method through an extensive simulation study and illustrate it in a field experiment investigating the effect of anti-conflict norms among adolescents.  

\begin{quote}
\textit{Keywords:} Causal Inference, Randomization Tests, Model Specification, Spillover Effects
\end{quote}

\section{Introduction}
Randomized experiments are the cornerstone of causal inference. A key assumption in the estimation of treatment effects is the Stable Unit Treatment Value Assumption (SUTVA). This rules out the possibility of interference in the population of interest: the treatment status of one unit does not affect the potential outcomes of other units. This assumption is violated in many settings, and there is increasing interest in studying spillover effects arising from interference structured by a network (e.g., \cite{paluck2016changing}; \cite{cai2015social}). Estimating spillover effects is impossible without making any structural assumptions, since the potential outcomes depend on the treatment status of all units. The number of counterfactuals in this case grows exponentially in the size of total units, and identifying causal estimands is challenging even under restricted forms of interference (\cite{halloran1995causal}; \cite{sobel2006randomized}). A growing body of literature tackles this issue by using \textit{exposure mappings}. Proposed in \cite{hong2006evaluating} and formalized in \cite{manski2013identification} and \cite{aronow2017estimating}, an exposure mapping restricts the cardinality of interference parameters in the potential outcome function to enable the identification of spillover effects and can be interpreted as a cumulative treatment. This lower-dimensional representation not only helps estimate the spillover effect but also defines spillover estimands (c.f. \cite{savje2024causal}). Assuming that the specified exposure mapping captures all causal interference mechanisms, one can frame potential-outcome counterfactuals sliced at each exposure level. This generalizes the SUTVA assumption to a setting with potential interference at the exposure-mapping level (\cite{savje2021average}), thereby reducing the complexity of the interference problem. \\
Existing methods assume that these exposure mappings are correctly specified, that is, they capture all the relevant spillover information in the population of interest. Some recent works have investigated the impact of misspecification in exposure mapping on the performance of standard estimators (e.g., \cite{leung2022causal}; \cite{savje2024causal}). A central finding across these works is the robustness of standard estimators to some controlled forms of misspecification. Another framework proposed in the literature to ensure robustness to interference is the use of marginalized estimands for the spillover effect (e.g., \cite{savje2021average}; \cite{hu2022average}; \cite{wang2020design}). This helps study spillover effects without the explicit use of exposure mappings and stands robust to many forms of interference. One drawback is that no insight into the spillover mechanism can be gained because the estimand is marginalized. \cite{savje2024causal} delineates the use of exposure mapping, serving two distinct roles: defining the spillover estimand and encoding the causal mechanism in the study. While the spillover estimand can, in principle, be defined using any exposure mapping of interest, it is in how exposure mapping captures the causal mechanism that misspecification plays a role.\\
In this paper, we focus on developing a data-driven method to test if the exposure mapping is correctly specified in an experimental study. This helps researchers empirically validate the underlying causal mechanism. This can also inform the researcher in defining the spillover causal estimands, helping with the interpretability of the results. We work under the design framework that leverages the treatment assignment mechanism for estimation and inference. This contrasts with the super-population perspective, where the sampling process drives inference. Alternatively, one can view the design framework as conditional on the sampled units. Our first contribution is to develop a framework for testing the correct model specification of exposure mapping, formulated as a \textit{sharp} null hypothesis. In a concurrent work, \cite{gao2026impossibility} also develops a framework for specification testing under the sharp null and presents a novel impossibility result: without imposing restrictions on the model space of potential outcome functions, it is impossible to construct an informative randomization testing procedure for specification testing of the exposure mapping model. They then present a specification testing procedure with the alternative model space restricted to a linear-in-means model, and demonstrate its consistency. A key contribution of \cite{gao2026impossibility} is the presentation of theoretical power analysis and consistency for the proposed randomization testing procedure. We work with a broader class of potential outcome model functions and illustrate the power of our analysis through numerical studies.

The sharp null can be assessed using the standard `Fisher Randomization Test' (FRT). There is a burgeoning set of literature that proposes the use of randomization testing procedures for spillover effect (\cite{athey2018exact}; \cite{basse2019randomization}; \cite{puelz2022graph}; \cite{zhang2025multiple}; \cite{zhong2024unconditional}; \cite{tiwari2024quasi}). Refer to \cite{zhang2023randomization} for a detailed discussion on randomization tests. \cite{pouget2019testing} develops a hierarchical experimental design approach that combines a completely randomized trial and a cluster-randomized trial to test for interference, in contrast to Fisherian-style testing approaches. These methods test for the presence of interference and do not address the exact mechanism by which spillover effects occur. \cite{hoshino2026conditional} considers a hierarchical relationship between exposure mapping and alternative finer exposure mapping to build an exposure mapping specification test. The framework requires specifying the (finer) alternative exposure mapping, with the choice of this finer exposure mapping to be defined by the practitioner. In our framework, we define a finite-sample residual and develop a randomization testing procedure to test for a correctly specified interference mechanism. A key theme among the conditional randomization testing literature is ensuring the imputability of the potential outcome and constructing a valid test. Our work differs from this setup in that we do not observe the true residuals; instead, we use an estimate in the testing procedure. Our second contribution is to prove theoretical guarantees for the asymptotic validity of the estimated randomization test within the design framework. \cite{abadie2020sampling} reconciles standard errors in the sampling-based perspective with the design perspective for OLS. \cite{toulis2025asymptotic} provides theoretical bounds for the size of discrepancy between the approximate randomization test and the oracle randomization test at a given sample size. In this work, we work with a design interpretation of the OLS estimator under dependence induced by the interference structure. We show that the OLS coefficient of the outcome regressed on a given function transform of the exposure mapping converges to a constant as the number of observations increases if the exposure mapping is correctly specified. This asymptotic invariance to the treatment assignment vector under the null aids in building a randomization testing procedure, similar to the imputation of the potential outcome function under the null in FRT. We also show that this approximate randomization test is asymptotically valid in the presence of interference. For the testing procedure, we propose a novel test statistic based on the graphical correlation between the treatment and the residuals. \\ 
We present the outline of the paper. In Section \ref{section: setup}, we present notation and formalize the model specification testing framework. Section \ref{section: main result} details the main causal specification test hypothesis testing procedure, and introduces the test statistic proposed. Section \ref{section: est test} provides the theoretical guarantees for the asymptotic validity of the approximate randomization test. In Section \ref{section: sim study}, we conduct an extensive simulation study to numerically validate the results and demonstrate the method's power. Section \ref{section: application} illustrates our methodology on an anti-conflict intervention experimental dataset from \cite{paluck2016changing}, and Section \ref{section: conclusion} concludes the paper with future research avenues. All proofs and additional results are provided in the Appendix. 

\section{Setup}
\label{section: setup}
\subsection{Notation}
Consider a population $\mathbb{P} = \{1,2,...N\}$ of size $N$. The population units can be characterized by a set of covariate vectors, denoted by $X$. The population is also connected via a network, which we denote by the adjacency matrix with $G \in \{0,1\}^{N \times N}$. Here, $G_{ij} = 1$ indicates that units $i$ and $j$ are connected, consistent with the adjacency matrix. We assume $G$ is a simple undirected graph. We define $N_i$ as all units that are connected to $i$. That is, $N_i := \{j: G_{ij} = 1\}$.  We define the set $\bar{N_i} := N_i \bigcup \{i\}$. We denote the distance between units $i$ and $j$ by $dist(i,j)$. It represents the shortest path length between the units $i$ and $j$. By convention, we take $dist(i, i) = 0$, and $dist(i,j) = \infty$ if there does not exist any path between the units $i$ and $j$ for any two units $i$ and $j$ in $\mathbb{P}$. We denote $\mathcal{N}^{(d)}(i) := \{j: dist(i,j) = d\}$. We restrict our attention to a binary treatment assignment vector and denote it by $Z \in \{0,1\}^N$. Here, $Z_i=1$ corresponds to a treated unit, and $Z_i = 0$ corresponds to a control unit. For any $c \subseteq [N]$, we define $Z_c:= (Z_i : i \in c)$. The corresponding potential outcome function is represented by $Y_{X, G}(Z)$, where the covariates and the network remain fixed. For readability, we will use the notation $Y(Z)$ in the rest of the article unless deemed necessary. An exposure mapping $e_G:\{0,1\}^N \rightarrow \mathbb{R}^{N \times (k+1)}$ is a function of the treatment assignment vector mapping such that ${e_G}_i = (f_{ij}(Z, G))_{j\in[k+1]}$. Here, $f_{ij}(.)$ is a real-valued function such that $f_{i1} = Z_i$, and $f_{ij}(Z,G) = f(Z_{\mathcal{N}^{(j-1)}(i)},G)$ for $1< j\leq k+1$ with $k$ fixed. We treat the network as fixed, and use the notation $e$ for exposure mapping. The probability distribution of any random variable is denoted by $P(.)$. Thus, the probability distribution of $Z$ is denoted by $P(Z)$ and the observed random sample is denoted as $Z_{obs}$. That is $Z_{obs} \sim P(Z)$. We denote the support of $Z$ as $\mathcal{Z} := supp(P(Z))$.   

\subsection{Model specification under network interference: Methodological examples}
\label{section: both applications}
We consider the potential-outcome framework, popularly known as the Rubin Causal Model (\cite{rubin1974estimating}). We consider an experimental setup in which the treatment assignment mechanism is given by $P(Z|X, G)$. In an experimental study, the dependence on the covariate and graph is known by design. Therefore, consider $P(Z|X, G)$ where the probability distribution is known and controlled by the experimenter. We take the population $\mathbb{P}$ fixed, and all randomness in the potential outcome function arises from design uncertainty. The definition and estimation of treatment effects in the presence of interference, such as the exposure contrast effect, hinge on structural or functional modeling assumptions. Without any assumption, it is impossible to identify, let alone estimate, spillover effects. There is a burgeoning literature on causal inference under network interference, exploring how the interference mechanism is specified in the outcome function. One example of such modeling assumptions used in this literature is of exposure mapping, which we present below. In Appendix \ref{section: example low beta order model}, we present another example of an outcome model for network interference, proposed in \cite{cortez2023exploiting}. These motivate the paper's focus on developing a model specification test to assess whether the proposed interference mechanism is correctly specified. 

\subsubsection{Exposure mapping specification}
\label{section: exposure mapping}
Exposure mappings are a low-dimensional functional representation of the treatment assignment vector. Given that plausibly potential outcomes are dependent on the entire treatment assignment vector, exposure mappings encode a functional representation of these dependencies. This facilitates not only the estimation of spillover effects but also their definition. \cite{savje2024causal} delineates the use of exposure mappings for both defining and encoding causal spillover mechanisms. The author suggests that this demarcation is helpful for defining exposure mappings and shows that, under some controlled forms of exposure-mapping misspecification in the definition, the estimators perform well. However, many practitioners are interested in uncovering the underlying causal interference mechanism, and a model specification testing procedure can help them assess whether the chosen exposure mapping is correctly specified for that mechanism. Here, we formally frame the misspecification of exposure mapping. Following \cite{savje2024causal}, we consider the misspecification error defined as $\epsilon_i(Z) = Y_i(Z) - E[Y_i(Z)\;|\;e_i = e_i(Z)]$ for $i\in [N]$. Thus, under no misspecification of the exposure mapping, we have $\epsilon_i = 0$ for any treatment assignment. We formulate this as a hypothesis testing statement to test for such misspecification. 
\begin{hypothesis}
\label{hypothesis: single exp miss}(Null hypothesis of no exposure mapping misspecification)
    \begin{equation}
        Y_i(Z) = E[Y_i(Z)\;|\;e_i(Z)]\quad \forall Z \in \{0,1\}^N, i \in [N].
    \end{equation}
\end{hypothesis}
Suppose a researcher posits several possible exposure mappings, each corresponding to a different causal interference mechanism. To infer the exposure mapping(s) that are correctly specified for the analysis, we can treat each exposure mapping's misspecification as a hypothesis test and construct a multiple-hypothesis testing statement. We formally state the multiple-hypothesis testing framework below. We assume that the set $E:= \{e_1,e_2,...e_M\}$ contains $M$ exposure mappings, where $M$ is fixed and finite.

\begin{hypothesis}
\label{hypothesis: multiple exposure}
(Multiple null hypotheses of no exposure mapping misspecification)
\begin{equation}
    \mathcal{H}_j: Y_i(Z) = E[Y_i(Z)\;|\;e_{ji}(Z)]\quad \forall Z \in \{0,1\}^N, i \in [N], j \in [M].
\end{equation}   
\end{hypothesis}
Because multiple exposure mappings are tested simultaneously, conducting separate specification tests for each mapping without adjustment would inflate the overall probability of making at least one false rejection. In this setting, false rejection of a correctly specified exposure mapping could lead the researcher to discard the true causal mechanism. To ensure that the overall probability of such an error remains controlled, we require procedures that bound the Family-wise error rate (FWER), that is, the probability of making one or more false rejections across all hypotheses (\cite{lehmann2005testing}). Controlling the FWER guarantees that, with probability at least $1-\alpha$, no correctly specified exposure mapping is incorrectly ruled out, thereby preserving the validity of subsequent causal conclusions. We consider an example below. Suppose a researcher postulates that an exposure mapping, for example, the proportion of treated neighboring units, captures the relevant exposure effect. However, the researcher remains uncertain about the depth of spillover effects, that is, whether interference arises only through immediate neighbors or also extends to second-degree and higher-order connections. \cite{zhang2025multiple} and \cite{zhong2024unconditional} consider the case of multiple hypothesis testing to test for spillover effect depth. We formulate this as a hypothesis testing framework below.
\begin{hypothesis}
\label{hypothesis: sequential test depth}
(Testing for depth of spillover effects)\\
    Consider $E := \{e_j: j \in [M]\}$. Suppose 
    \begin{equation}
        e_{ji} = (Z_i,f(Z_{{N_i}^1},{N_i}^1),..., f(Z_{{N_i}^j},{N_i}^j)) \quad \forall j \in [M], i \in [N].
    \end{equation}
     Here, ${N_i}^j = \{l: dist(i,l) = j\}$, that is, all the units at a distance of $j$ from the unit $i$. Consider the following set of hypotheses
     \begin{equation}
       \mathcal{H}_j:\: Y_i(Z) = E[Y_i(Z)\:|\:e_{ji}] \quad \forall j \in [M], i \in [N]. 
     \end{equation}     
\end{hypothesis}
The above hypothesis considers an increasing chain of conditioning subsets. This formulation is appropriate when researchers believe that any spillover effects, if present, decay monotonically with network distance, that is, higher-order neighbors exert progressively weaker influence (\cite{leung2022causal}). The above-stated hypothesis is structural and specifies how the outcome depends on exposure within a given neighborhood radius. Below, we provide examples of exposure mappings from the literature, presented as two-dimensional representations of the treatment vector.
\begin{example}
\label{eg: setting 1}
In \cite{cai2015social}, the authors are interested in studying information dissemination of a weather insurance product across villages in rural China. The authors hypothesize that the village's social network can create a word-of-mouth mechanism that drives up the adoption rate of the insurance product. In the article, the authors use exposure mapping $e_i(Z,G) =(Z_i,\frac{\sum_{j \in N_i} Z_j}{|N_i|})$. That is, the authors use the proportion of treated neighbors as a low-dimensional representation of the exposure effect.  
\end{example}
\begin{example}
\label{eg: setting 2}
In \cite{paluck2016changing}, the authors investigate the impact of peer-to-peer influence on anti-bullying behavior. In an interventional study, they classify students in the school into two categories: those connected with a social referent ($S$) and those without. Being connected with an influential student is postulated to have more influence. This leads us to the exposure mapping $e_i(Z,G) = (Z_i,\mathcal{I}(j \in N(i)\bigcap S: Z_j = 1))$. To contrast the influence of social referents, one can construct the following exposure mapping $e^c_i(Z, G) = (Z_i,\mathcal{I}(\nexists\; j \in N(i)\bigcap S:\; Z_j = 1))$.    
\end{example}
\begin{example}
\label{eg: setting 3}
In \cite{collazos2021hot}, the authors study the effect of ``hotspot policing", that is, intensive policing on streets identified as hotspots for crime. Since there may be spillover onto neighboring streets due to intensive policing on one street, the study needs to account for this effect. Given the setting's geographic nature, one can construct an exposure map based on radial distance from a treated street. Consider the following exposure mapping $e_i(Z,G) = (Z_i, I(\exists\; j: \;dist(i,j)\leq 500, Z_j =1))$.    
\end{example}

\subsection{Causal model specification as a hypothesis test}
\label{section: causal model selection}
We now define a general framework for performing model specification in causal inference under network interference. In particular, we focus on testing for misspecification of exposure mapping as described in Section \ref{section: exposure mapping}. We define the misspecification error as residual under the finite-sample setting, stated formally below. 
\begin{definition}
\label{definition: residual}
(Finite-sample residual) Consider an exposure mapping $e_i(Z, G)$. Then, we define the finite-sample residual as 
\begin{equation}
     R_i(Z) = Y_i(Z) - E_Z[Y_i(Z)\;|\;e_i(Z)].
\end{equation} 
where $E_Z[Y_i(Z)\;|\;e_i(Z)]$ represent the conditional mean response function. 
\end{definition}
We work with the design-based framework; hence, all stochastic variation in Definition \ref{definition: residual} arises from variation in treatment assignment $Z$. We can view this framework as a potential-outcome model conditional on other data-generating processes, such as sampling units from the observed population or observing covariates. Hence, we do not explicitly write $X$ in Definition \ref{definition: residual}. As stated before, we work within an experimental setup that allows the analyst to control the mechanism for treatment assignment. We formally write this below. 
\begin{assumption}
\label{assumption: experiment setup}
(Randomized experiment)
    Consider a given treatment assignment mechanism $P(Z)$ such that by design
    \begin{equation}
        Z \perp\!\!\!\perp Y(z) \quad \forall\; z\in \{0,1\}^N.
    \end{equation}
    Here, $P(Z)$ is known and is assumed to be positive for all $Z$. We consider $P(Z_i) \sim_{i.i.d.} Bernoulli(p)$.  
\end{assumption}
Common examples of randomized experiments include Bernoulli experiments and completely randomized experiments, where $Z \perp\!\!\!\perp X$. For ease of exposition, we restrict our attention to Bernoulli randomized trials. Under the null hypothesis of no model misspecification, we get $Y_i = E[Y_i|e_i]$ (see Hypothesis \ref{hypothesis: single exp miss}). This implies that under the null hypothesis of no model misspecification, we get $R_i(Z) = 0$ for any treatment assignment $Z$ and unit $i$ in the population, following Definition \ref{definition: residual}. We will later modify the residual definition such that any unit-level baseline effect, potentially heterogeneous, is separated out from the conditional mean response function. This baseline effect could depend on covariates and pre-treatment characteristics. The purpose of the residual is to capture the stochastic variation induced in the outcome model by the exposure mapping, which depends on the treatment assignment vector. Separating out the baseline effect yields the same null and does not equate to zero across all units. We explain this further using the example stated below. Consider the following linear model. 
\begin{example}
\label{example: GLM}
(Generalized Linear Model) Consider an exposure mapping of the form $e_i(Z) = (Z_i, f_i(Z_{-i}))$. 
\begin{equation}
  Y_i(Z) = \tau_dZ_i + \tau_{sp}f_i(Z_{-i}) + \beta_i\cdot X_i + c_i.  
\end{equation}    
\end{example}
Here, $\tau_d$ and $\tau_{sp}$ represent the direct and spillover effect, respectively. In the above model, $e_i(Z)$ is the correctly specified exposure mapping for the potential outcome model. We would like to propose a convention where the conditional mean response function is $\tau_dZ_i + \tau_{sp}f_i(Z_{-i})$. Thus, the modified residual $\tilde{R_i}(Z) = \beta_i\cdot X_i + c_i$ in this example. 

We first formally present our null hypothesis below in terms of $R_i(Z)$. Then, we define and present the null hypothesis in terms of the modified residual $\tilde{R}(Z)$. The key intuition is that the residual is invariant to the treatment assignment mechanism. This is because, if the exposure mapping model is correctly specified, all causal mechanisms should be captured in the conditional mean response function, and the residual should be independent of the treatment assignment, that is, $R_i(Z) \perp\!\!\!\perp Z \quad \forall\; Z \in \{0,1\}^N,\; i\in [N]$ under the null. This implies that the residuals remain constant and are invariant to changes in the treatment vector $Z$. We now formally define $\tilde{R}(Z)$ to include the baseline (heterogeneous) fixed effect as part of the residual.

\begin{definition}
\label{def: modified residual}
Consider $Y(Z)$. We define $\tilde{Y}(Z)$ such that
\begin{equation}
    Y_i(Z) = \tilde{Y}_i(Z) + c_i,
\end{equation} 
and $\tilde{Y}(\bar{0}) = 0$. Here, $c_i$ represent the baseline fixed constants.\\
We define $\tilde{R}(Z)$ as follows.
\begin{equation}
    \tilde{R}_i(Z) = Y_i(Z) - E[\tilde{Y}_i(Z)|e_i(Z)]. 
\end{equation}
\end{definition}
We interpret $\tilde{Y}(\bar{0})$ being set to $0$ as the causal outcome isolated to the exposure. We define the causal outcome attributed to the treatment to be $\tilde{Y}$, which is set to be zero if there is no treatment exposure. Since under the null, $R_i(Z) = 0$ for all the units $i$, we get $\tilde{R}_i(Z) = c_i$ for all the units $i$. Hence, $\tilde{R}(Z)$ is also invariant to $Z$. We present our main residual testing hypothesis below. 

\begin{mdframed}[leftmargin=0.1in,rightmargin=0.1in]
\begin{hypothesis}(Null of correctly specified interference mechanism model)
\label{hypothesis: main residual hypothesis}
\begin{equation}
    \mathcal{H}_0: \quad \tilde{R_i}(Z) = \tilde{R_i}(Z') \quad \forall\; Z,Z' \in \mathcal{Z}\;\; \forall\: i \in [N].  
\end{equation}   
\end{hypothesis}
\end{mdframed}
\begin{remark}
    Hypothesis \ref{hypothesis: main residual hypothesis} is our main hypothesis testing null framework that we will use to build the model specification procedure. The procedure is based on the residuals defined in the null hypothesis, and not on the potential outcome function directly, a key challenge that the paper addresses.   
\end{remark}

\section{Main result}
\label{section: main result}
We build towards a causal model specification procedure based on hypothesis testing. We present a hypothesis testing procedure to test if an exposure mapping is correctly specified, as defined in Hypothesis \ref{hypothesis: single exp miss}. For this, we build a randomization test in the spirit of Fisher's Randomization Test (FRT; \cite{fisher1966design}). One of the key features of FRT is the assumption-lean nature of the test, since it builds the randomization distribution of a test statistic using the design of the experiment $P(Z)$. Since $P(Z)$ is known to the analyst, the distribution is known exactly in finite samples. We provide a brief overview of the FRT procedure below.
\subsection{Fisher randomization test: Overview}
\label{sec: FRT}

The Fisher Randomization Test (FRT) is a hypothesis-testing procedure based on randomization inference. The key feature of FRT is testing for \textit{sharp} null-hypothesis violations. Under a sharp null of no treatment effect, $Y_i(1) = Y_i(0)$ for all units $i$ in the population. This is a stricter hypothesis than testing whether the Average Treatment Effect (ATE) is zero, that is, $E[Y_i(1)-Y_i(0)] = 0$. This is referred to as the \textit{weak} null in the literature. We present the FRT procedure below. 
\begin{proced}
\label{procedure FRT}  
Consider $(Z_{obs}, Y_{obs})$ as an experimental data.
\begin{enumerate}
    \item Consider a test statistic $T(Z,Y(Z))$.
    \item Define the following test statistic $T_z = T(z,Y_{obs})$, where $z \sim P(Z)$.
    \item Compute the test statistic $T_{Z_{obs}} = T(Z_{obs}, Y_{obs})$.
    \item Define and compute $pval = \frac{1}{L}\cdot \sum_{z_j \stackrel{\text{i.i.d.}}{\sim} P(Z):\: {j \in [L]}}I(T_{z_j} \geq T_{Z_{obs}})$.
\end{enumerate}
\end{proced}
Similarly, we can define and obtain p-value for two-sided test by taking $pval = 2\min(\frac{1}{L}\cdot \sum_{z_j \stackrel{\text{i.i.d.}}{\sim} P(Z):\: {j \in [L]}}I(T_{z_j} \geq T_{Z_{obs}}), \frac{1}{L}\cdot \sum_{z_j \stackrel{\text{i.i.d.}}{\sim} P(Z):\: {j \in [L]}}I(T_{z_j} \leq T_{Z_{obs}}))$. Here, no assumption is imposed on the test statistic and should be chosen according to its sensitivity to the sharp null hypothesis. Different test statistics differ in their power to test the null. A standard choice for testing no treatment effect is the Difference-in-means estimator, which is equal to $\sum_i(Y_i Z_i/N_t - Y_i(1-Z_i)/N_c)$, where $N_t$ is the number of treated units and $N_c$ is the number of control units. The choice of the number of re-draws $L$ is controlled by the analyst, and can in principle be chosen arbitrarily large. Hence, the p-values obtained from the procedure are exact. This provides a robust approach to randomization inference in experimental settings. Since it exploits the stochasticity of the design of the experiment, the inference holds validity in finite samples, and does not need an asymptotic regime (see \cite{imbens2015causal} for a detailed discussion).

\subsection{Causal model specification: A randomization  test}
We defined causal model specification under the potential outcome framework in Section \ref{section: causal model selection}. The main hypothesis that accounts for the testing of model misspecification of an exposure mapping is Hypothesis \ref{hypothesis: main residual hypothesis}. The hypothesis states that under the null of no model misspecification in the exposure mapping, the residuals are invariant to the treatment assignment vector. This is similar to the FRT, where, under the null of no treatment effect, the potential outcome model was invariant with respect to the treatment assignment vector. This imputation of missing potential outcomes helps build a randomization test based on the experimental design. In the same spirit, we will build a randomization hypothesis exploiting the imputation of residuals under the null. There has been a recent surge of literature on extending FRT to complex treatment effects. This includes testing for spillover effects when network interference is present (\cite{athey2018exact}; \cite{basse2019randomization}) and testing for heterogeneous treatment effects (\cite{ding2016randomization}). Notably, residuals have been used in randomization tests to construct test statistics that incorporate covariates in the potential outcome model (e.g., \cite{zhao2021covariate}). \\
Since our problem of interest is testing for exposure model misspecification, our work builds on randomization testing but is conceptually distinct from the literature on FRT extensions for testing treatment effects. The null hypothesis concerns itself with the true residuals, not the potential outcomes. Since we do not directly observe the residuals, we do not know the true value of $\tilde{R}_{obs}$ (defined to be $\tilde{R}(Z_{obs})$). In this section, we present a proof-of-concept for the oracle procedure that assumes that the true value of the observed residual, that is $\tilde{R}_{obs}$, is known. Consider the following procedure.
\begin{proced}
\label{procedure oracle}    
Consider $(Z_{obs}, Y_{obs})$ as the observed data from an experiment.
\begin{enumerate}
    \item Consider a test statistic $T(Z,\tilde{R}(Z))$.
    \item Define the test statistic $T_z = T(Z=z, \tilde{R}_{obs})$ where $z \sim P(Z)$.
    \item Compute $T_{Z_{obs}}$ which is equal to $T(Z_{obs}, \tilde{R}_{obs})$.
    \item Define and compute $pval(Z_{obs}, \tilde{R}_{obs}) = \frac{1}{L}\cdot \sum_{z_j \stackrel{\text{i.i.d.}}{\sim} P(Z):\: {j \in [L]}}I(T_{z_j} \geq T_{Z_{obs}})$.
\end{enumerate}
\end{proced}
Here, $L$ denotes the number of redraws from the distribution of $Z$. This is controlled by the analyst, and can be taken as large as deemed necessary for arbitrary precision. Thus, the p-value obtained from the above procedure is exact, similar to FRT. Alternatively, we can use p-value as $\frac{1+\sum_{z_j \stackrel{\text{i.i.d.}}{\sim} P(Z):\: {j \in [L]}}I(T_{z_j} \geq T_{Z_{obs}})}{1+L}$ for an exact estimate. Below, we formalize and prove that the p-values obtained from Procedure \ref{procedure oracle} hold validity in finite samples. 
\begin{theorem}
\label{theorem: oracle method validity}
    Let $(Z_{obs}, Y_{obs})$ be the observed data from a randomized experiment such that $Z_{obs} \sim P(Z)$ and $\tilde{R}_{obs} = \tilde{R}(Z_{obs})$. Consider an exposure mapping $e(Z)$ and a test statistic $T(Z, \tilde{R}(Z))$. Consider the null hypothesis that the exposure mapping $e(Z)$ is correctly specified in Hypothesis \ref{hypothesis: main residual hypothesis}, denoted by $\mathcal{H}_{0}$. Assuming that $\tilde{R}_{Z_{obs}}$ is known, Procedure \ref{procedure oracle} holds validity at a given level $\alpha \in (0,1)$. That is, 
    \begin{equation}
        E[I(pval(Z_{obs}, \tilde{R}_{obs}) \leq \alpha)\: |\: \mathcal{H}_0] \leq \alpha. 
    \end{equation}
    Here, the expectation is with respect to $P(Z_{obs})$. 
\end{theorem}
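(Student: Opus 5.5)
The plan is the standard exchangeability argument for Fisher randomization tests, adapted to residuals. First, use $\mathcal{H}_0$ to fix the residual vector. Under Hypothesis \ref{hypothesis: main residual hypothesis}, $\tilde{R}(Z)=\tilde{R}(Z')$ for all $Z,Z'\in\mathcal{Z}$. So there is a deterministic vector $r^\ast$ (equal to $c$ in Definition \ref{def: modified residual}) with $\tilde{R}(z)=r^\ast$ for every $z$ in the support, and in particular $\tilde{R}_{obs}=r^\ast$ almost surely. Consequently $T_{Z_{obs}}=T(Z_{obs},r^\ast)$, and each redraw $T_{z_j}=T(z_j,r^\ast)$ is the same fixed function $g(\cdot):=T(\cdot,r^\ast)$ evaluated at an independent copy of $Z$. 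This is the step that replaces the imputation of potential outcomes in Procedure \ref{procedure FRT}. It is the only place the null is used.

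Next, reduce to i.i.d.\ draws. Conditional on nothing, $Z_{obs},z_1,\dots,z_L$ are i.i.d.\ from $P(Z)$. This holds by Assumption \ref{assumption: experiment setup} and the construction of Procedure \ref{procedure oracle}, because the redraws are generated independently of the experiment. Hence $g(Z_{obs}),g(z_1),\dots,g(z_L)$ are i.i.d.\ real random variables, and in particular exchangeable.

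I would prove validity for the version $p^+=\frac{1+\sum_j I(g(z_j)\ge g(Z_{obs}))}{1+L}$. This is the $p$-value the paper lists as the exact alternative. By exchangeability, the rank of $g(Z_{obs})$ among the $L+1$ values, counted from the top with ties counted unfavourably, is stochastically larger than uniform on $\{1,\dots,L+1\}$. The usual argument gives $P(p^+\le\alpha)\le \lfloor\alpha(L+1)\rfloor/(L+1)\le\alpha$. For the plain average $\frac1L\sum_j I(\cdot)$, I would argue in two steps. First, condition on $Z_{obs}$: then $\frac1L\sum_j I(g(z_j)\ge g(Z_{obs}))$ converges to $F(Z_{obs}):=P_{Z'}(g(Z')\ge g(Z_{obs}))$, where $Z'\sim P(Z)$ is an independent copy of the assignment. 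Second, note that $F(Z_{obs})$ is a super-uniform $p$-value, since $P(F(Z_{obs})\le \alpha)\le\alpha$ by the standard survival-function argument for discrete distributions. Because $\mathcal{Z}$ is finite, I may equivalently take $L$ to be the full enumeration weighted by $P$, which is the sense in which the paper calls the $p$-value exact.

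The main obstacle is the discreteness and ties in $g(Z)$. The statement must be proved as $\le\alpha$ rather than $=\alpha$, and the inequality $P(F(Z_{obs})\le\alpha)\le\alpha$ must be justified for a survival function with atoms. I would handle it by letting $t^\ast=\inf\{t: P(g(Z)\ge t)\le\alpha\}$-type quantile reasoning. Concretely, $\{F(Z_{obs})\le\alpha\}=\{g(Z_{obs})\ge t_\alpha\}$ for the smallest value $t_\alpha$ in the range of $g$ satisfying $P(g(Z)\ge t_\alpha)\le\alpha$. Its probability is therefore at most $\alpha$.
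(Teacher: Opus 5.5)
Your proof is correct. It shares its core with the paper's: under $\mathcal{H}_0$ the reference statistic $T(z,\tilde{R}_{obs})$ is a fixed function $g$ of an independent copy of the assignment, so observed and reference statistics are equal in distribution, and validity follows from super-uniformity. It differs from the paper's proof in two ways, both in your favour.

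First, the paper writes the $p$-value with a strict inequality, $P_Z\bigl(T(Z,R(Z))>T(Z_{obs},R_{obs})\bigr)=1-F_U(U)$, and cites the probability integral transform. Because $\mathcal{Z}$ is finite, $T$ is discrete, and that step is false as written. A constant statistic gives a $p$-value identically equal to $0$. With $N=1$, Bernoulli$(1/2)$ assignment and $g(z)=z$, the test rejects with probability $1/2$. Your use of $\ge$, which is what Procedure~\ref{procedure oracle} actually specifies, together with your $t_\alpha$ argument, is the right way to obtain $P(F(Z_{obs})\le\alpha)\le\alpha$ when there are atoms.

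Second, the paper only treats the idealized full-enumeration $p$-value. Your exchangeability/rank argument for $(1+\sum_j I(\cdot))/(1+L)$ gives exact validity at every finite $L$.

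The one spot to tighten is the plain average, which you could write as $p_L$. At finite $L$ it is genuinely not exact. For a statistic with many distinct values its size is close to $(\lfloor\alpha L\rfloor+1)/(L+1)$, about $6/101$ for $\alpha=0.05$ and $L=100$. So replacing it by the full enumeration is a reinterpretation, not an equivalence. If you instead pass to $L\to\infty$, add the line $\limsup_{L}P(p_L\le\alpha)\le P(F(Z_{obs})<\alpha)+P(F(Z_{obs})=\alpha)\le\alpha$. This handles the boundary event on which the Monte Carlo average need not settle on one side of $\alpha$.
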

\begin{proof}
    Refer to Appendix \ref{proof: oracle method validity}.
\end{proof}
The proof of Theorem \ref{theorem: oracle method validity} is a classic result from randomization inference. For completeness, we provide its proof in Appendix \ref{proof: oracle method validity}. Following \cite{lehmann2005testing}, one can also make the validity statement in equality under expectation, as opposed to inequality. This makes randomization inference exact, making the use of experimental design for randomization inference attractive. 

\subsection{Test statistic}
\label{section: test statistic}
We do not impose any assumption on the test statistic for Procedure \ref{procedure oracle} to be valid. But the test's power depends on the choice of test statistic. A test statistic should be sensitive to deviations from the null for Procedure \ref{procedure oracle} to have power. Here, we propose a test statistic that captures deviations from the Hypothesis \ref{hypothesis: main residual hypothesis}. Under the null hypothesis, the residuals should be free of treatment effects, and the exposure mapping should capture all dependence on the treatment assignment of other units. We take this opportunity to use the network $G$ amongst the units in the population to build the test statistic. We consider the correlation between the treatment assignment vector and the residuals weighted by the graph. Any systematic correlation between residuals and the treatment status of neighboring nodes would therefore indicate that the exposure mapping fails to capture the relevant channels through which interference operates. The network $G$ naturally encodes these potential channels of dependence, serving as the adjacency structure for detecting residual-treatment correlation. We will denote this measure by $T_{GC}$, abbreviated for `Graphical Correlation'. We formally present this metric below:   
\begin{equation}
T_{GC} = \frac{\sum_{i} \sum_{j} G_{ij}Z_i (\tilde{R}_j - \bar{\tilde{R}})}
         {||(\tilde{R} - \bar{\tilde{R}})||\;||Z||}.
\end{equation}
By weighing edges according to $G_{ij}$, we focus specifically on associations between a unit's treatment and the residuals of its neighbors, which is the type of dependence that should vanish under the null but appear under violations due to unaccounted spillovers or mis-specified exposure mappings.\\
It should be noted that $T_{GC}$ can only detect network dependence of the residuals up to an order of one. That is, if there are network spillover effects at a distance of two or higher, $T_{GC}$ will not hold any strength in detecting a spillover mechanism. For example, consider $e_{ki} = (Z_i, f(Z_{{N_i}^k}, {{N_i}^k}))$ for $1<k \leq N$. These exposure mappings explicitly capture the spillover mechanism at a distance $k$ in the graph $G$. We define a generalized version of the residual graphical correlation test statistic below. Consider the matrix that contains entries of all units that are exactly at a distance of $k$ in the graph, denoted by $G^{(k)}$. That is, $G^{(k)} = [I(dist(i,j) = k)]_{N\times N}$ where $i$ and $j$ are two units in the population. Consider $1\leq k \leq N$. Then, 
\begin{equation}
\label{eq: test stat def general}
{T_{GC}}^{(k)} = \frac{\sum_{i} \sum_{j} G_{ij}^{(k)}Z_i (\tilde{R}_j - \bar{\tilde{R}})}
         {||(\tilde{R} - \bar{\tilde{R}})||\;||Z||}.
\end{equation}
When $k = 1$, we recover the original statistic $T_{GC}$. That is, ${T_{GC}}^{(1)}$ is equal to $T_{GC}$. By varying $k$, we obtain a family of test statistics, each sensitive to dependence at a particular graph distance.

\section{Estimated residual randomization test: Asymptotic validity}
\label{section: est test}
 
In the above, we present the case for the validity of the oracle procedure, assuming residuals are known, as a proof-of-concept. Since we do not observe these residuals for the data, a direct application of FRT is not possible. We build a procedure by considering a test statistic based on the estimated conditional mean response function. We present the Procedure for estimating the p-value.
\begin{proced}
\label{procedure estimated}    
 Let $(Z_{obs}, Y_{obs})$ be the observed data from an experiment, and $\phi(.)$ be a given function in $\mathbb{R}^{d}$. Define $\hat{\bar{\beta}}$ as the OLS coefficient of $Y(Z) \sim [1,\phi(e(Z))]$, and $\hat{\beta} := \hat{\bar{\beta}}_{[2,...,(d+1)]}$.  
 \begin{enumerate}
     \item Consider $Y_{obs} \stackrel{\tiny \text{OLS}}{\sim} [1,\phi(e_{obs})]$ and obtain estimate of $\tilde{R}_{obs}$ as $Y_{obs} - \hat{\beta}_{obs}\cdot \phi(e_{obs})$.
     \item Take $\hat{Y}_i(Z) = \tilde{R}_{obs_i} + \hat{\beta}_{obs}\cdot \phi(e(Z))$ where $\hat{\tilde{R}}(Z)$ is the residual estimate obtained from the OLS model $\hat{Y}(Z) \sim [1,\phi(e(Z))]$.
     \item Consider the test statistic $T_z = T(z,\hat{\tilde{R}}(z))$, where $z \sim P(Z)$.
     \item Compute $T_{Z_{obs}}$ which is equal to $T(Z_{obs}, \hat{\tilde{R}}(Z_{obs}))$.
     \item Define and compute $\hat{pval}(Z_{obs}, \tilde{R}_{obs}) = \frac{1}{L}\cdot \sum_{z_j \stackrel{\text{i.i.d.}}{\sim} P(Z):\: {j \in [L]}}I(T_{z_j} \geq T_{Z_{obs}})$.
 \end{enumerate}
\end{proced}
Consider a science $\{Y_i(z)\}$ for all $z \in \{0,1\}^N$. We define the following population loss function.
\begin{definition}
\label{definition: MSE}
(Finite-population mean-square error) Consider a model space on $e$ as $\mathcal{M}$. Let $I \sim Unif\{1,...,N\}$. Then, given $m_i \in \mathcal{M}^N$, we define
    \begin{equation}
    \begin{aligned}
        L_{\mathbb{P}_N,Z}(m) &= E_{I,Z}[(Y_I(z) - m_I(e_I))^2], \\
        L_{\mathbb{P}_N,Z} &= min_{m\in \mathcal{M}^N}  \;L_{\mathbb{P}_N,Z}(m).
    \end{aligned}
    \end{equation}
\end{definition}

Here, $\mathcal{M}$ is a class of functions with a given $e$ as domain. Consider the following 
\begin{lemma} 
\label{lemma: MSE minimizer}
    $L_{\mathbb{P}_N,Z}$ has a minimizer such that 
    \begin{equation}
        argmin_{m \in \mathcal{M}^N} E_{I,Z}[(Y_I(z) - m_I(e_I))^2] = argmin_{m \in \mathcal{M}^N} E_{I,Z}[(E_Z[Y_I|e_I] - m_I(e_I))^2]. 
    \end{equation}
    and if $E_Z[Y_i|e_i] \in \mathcal{M}_i$ for all $i \in [N]$, then $m_i(e_i) = E_Z[Y_i|e_i]$ a.s.
\end{lemma}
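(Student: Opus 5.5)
The plan is to use the standard bias--variance (Pythagorean) decomposition of squared loss, applied unit by unit under the design distribution of $Z$ and then averaged over $I$. Fix $m \in \mathcal{M}^N$ and write $\mu_i(e_i) := E_Z[Y_i \mid e_i]$. For each unit $i$ we expand
\begin{equation*}
E_Z[(Y_i(Z) - m_i(e_i))^2] = E_Z[(Y_i - \mu_i(e_i))^2] + E_Z[(\mu_i(e_i) - m_i(e_i))^2] + 2E_Z[(Y_i - \mu_i(e_i))(\mu_i(e_i) - m_i(e_i))].
\end{equation*}
The cross term vanishes by the tower property. Condition on $e_i = e_i(Z)$: the factor $\mu_i(e_i) - m_i(e_i)$ is $\sigma(e_i)$-measurable, and $E_Z[Y_i - \mu_i(e_i)\mid e_i] = 0$.

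Since $N$ is finite, $Y_i(Z)$ takes finitely many values under the support $\mathcal{Z}$ of Assumption \ref{assumption: experiment setup}. All expectations are therefore finite sums and integrability is automatic. Averaging over $I \sim Unif\{1,\dots,N\}$, which is independent of $Z$, gives
\begin{equation*}
L_{\mathbb{P}_N,Z}(m) = E_{I,Z}[(Y_I - \mu_I(e_I))^2] + E_{I,Z}[(\mu_I(e_I) - m_I(e_I))^2].
\end{equation*}
The first term does not depend on $m$. Hence the two objectives differ by a constant, and their argmin sets coincide. This is the displayed identity.

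Existence of a minimizer needs a mild condition on $\mathcal{M}$, and this is the main obstacle, since the statement leaves $\mathcal{M}$ unspecified. The key observation is that $e_i$ takes finitely many values under $P(Z)$. So $m_i$ enters the objective only through the finite vector $(m_i(v))_{v \in \mathrm{supp}(e_i)}$, and the reduced objective is a continuous quadratic in these coordinates. For the existence claim I would therefore assume that the restriction of $\mathcal{M}$ to $\mathrm{supp}(e_i)$ is a closed subset of a finite-dimensional space. Examples are a linear span $\{\beta\cdot\phi(e)\}$, matching Procedure \ref{procedure estimated}, or the class of all functions. Under that assumption the minimum of a coercive continuous quadratic over a closed set is attained, and I would simply state this assumption explicitly.

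For the second claim, suppose $\mu_i \in \mathcal{M}_i$ for every $i$. Choosing $m = \mu$ makes the second term zero, so $\mu$ is a minimizer. For any other minimizer $m$ the second term must also vanish, i.e. $\frac1N\sum_i E_Z[(\mu_i(e_i) - m_i(e_i))^2] = 0$. Each summand is nonnegative, so each is zero. This gives $m_i(e_i) = \mu_i(e_i)$ almost surely under $P(Z)$ for every $i$, which is the required uniqueness on the support of $e_i$.
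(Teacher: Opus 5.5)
Your proof is correct and follows the paper's argument. Both add and subtract $E_Z[Y_i\mid e_i]$ unit by unit, kill the cross term with the tower property, and conclude that the two objectives differ by an $m$-independent constant, so their argmin sets coincide. Two of your steps go beyond the paper. The paper never proves existence of a minimizer in general, which is automatic only when $E_Z[Y_i\mid e_i]\in\mathcal{M}_i$; your explicit closedness condition on $\mathcal{M}$ fills that gap. It also does not spell out the almost-sure uniqueness, which your nonnegative-summands argument supplies.
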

\begin{proof}
    Refer to Appendix \ref{proof: MSE minimizer}.
\end{proof}

Therefore, the MSE minimizer provides the closest approximation to the conditional mean response function within the chosen model class. If $E_Z[Y_i|e_i] \in \mathcal{M}_i$, then $L_{\mathbb{P}_N,Z} = \frac{1}{N}\sum_i\sum_{z}(Y_i(z)- E[Y_i|e_i])^2\cdot P(Z=z)$. Consider the finite population loss function (defined in Definition \ref{definition: MSE}).  We present the sample estimator below for the finite-population MSE loss estimand. We show that the sample estimator is unbiased. We also show that the sample estimator, given a model function, converges in probability to the population estimand corresponding to that function. We present this formally below. In this article, we restrict our attention to the ordinary least squares setting and assume the conditional mean response function to be approximable by a parametric linear basis function model, as presented in Assumption \ref{ass: OLS} below. We also make two assumptions about the potential outcome function and the network-generating process, similar to Conditions $3$ and $5$ in \cite{aronow2017estimating}. 
\begin{assumption}
\label{assumption: bdd outcome}
(Bounded outcomes and exposures) There exists $B \in \mathbb{R}$, and there exists $\epsilon_z \in \mathbb{R}$ for all $z$ such that 
\begin{equation}
\begin{aligned}
    |Y_i(z)| &\leq B \quad \forall i \in [N], \forall z\in \{0,1\}^N,\\
    ||e_i(z')|| &\leq b_z\quad\forall||z-z'||\leq \epsilon_z.
\end{aligned}
\end{equation}
That is, the potential outcome functions are bounded, and exposure mappings are locally bounded, both uniformly in N.     
\end{assumption}
As an illustration for the bounded exposure mappings, consider the settings in Example \ref{eg: setting 1}, \ref{eg: setting 2}, and \ref{eg: setting 3}. The exposure mappings considered are globally bounded between 0 and 1. 
\begin{assumption}
\label{assumption: bdd degree}
(Bounded degree of graph)
    Consider $\Delta(G)$ as the maximum degree of the graph. Then, 
    $\Delta(G) \leq \kappa$ for some $\kappa \in \mathbb{N}$. 
\end{assumption}
This assumption imposes limited dependence on the potential outcomes through exposure mapping. 
\begin{remark}
    The randomness of the process arises from design uncertainty in the treatment assignment mechanism. This contrasts with the standard sampling uncertainty under the super-population perspective. Our analysis fixes the population size $N$ and considers an experiment on the population $\mathbb{P}_N$. As we increase the sample size $N$, we show that the corresponding estimators of the MSE loss function converge. 
\end{remark}
Consider a given parametric linear basis model on the domain $e$, as defined below.
\begin{definition}
\label{def: model space basis}
    (Linear basis model) Consider a continuous function $\phi:\mathbb{R}^k\rightarrow\mathbb{R}^d$. Define 
    \begin{equation}
        \mathcal{M}_e = \{ \bar{\beta}\cdot [1, \phi(e_i)] : \bar{\beta} = [\beta_c,\beta],\:\beta_c \in \mathbb{R}, \beta \in \mathbb{R}^{d}\}.
    \end{equation} 
\end{definition}
That is, a given value of $\bar{\beta} \in \mathbb{R}^{d+1}$ corresponds to the function $m(e) = \bar{\beta} \cdot [1, \phi(e)]$. The following Lemma establishes the asymptotic consistency of the sample estimator for the MSE loss function (as defined in Definition \ref{definition: MSE}) under the parametric linear basis model, under the null. 
\begin{lemma}
\label{lemma: consistency}
    Consider the null hypothesis of correctly specified exposure mapping $e(Z)$, as defined in Hypothesis \ref{hypothesis: main residual hypothesis}, defined as $\mathcal{H}_0$. Given $m \in \mathcal{M}_e$, we define the following estimator 
    \begin{equation}
        \hat{L}_{e_{obs}}(m) = \frac{1}{N}\cdot \sum_i(Y_i({e_{obs}}_i) - m({e_{obs}}_i))^2.
    \end{equation}
    Then, under Assumption \ref{assumption: experiment setup}, \ref{assumption: bdd outcome}, \ref{assumption: bdd degree} and $\mathcal{H}_0$, we get
    \begin{equation}
        \hat{L}_{e_{obs}}(m) - L_{\mathbb{P}_N,Z}(m) \xrightarrow{p} 0. 
    \end{equation}
\end{lemma}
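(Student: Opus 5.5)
The plan is a Chebyshev argument. Fix $m \in \mathcal{M}_e$, so $m(e) = \bar{\beta}\cdot[1,\phi(e)]$ for a fixed $\bar{\beta}$. Define the unit-level losses $W_i(Z) := (Y_i(Z) - m(e_i(Z)))^2$. Then $\hat{L}_{e_{obs}}(m) = \frac{1}{N}\sum_i W_i(Z_{obs})$. Since $I$ is uniform and independent of $Z$, Definition \ref{definition: MSE} gives $L_{\mathbb{P}_N,Z}(m) = \frac{1}{N}\sum_i E_Z[W_i(Z)]$. So $E_Z[\hat{L}_{e_{obs}}(m)] = L_{\mathbb{P}_N,Z}(m)$, which is the unbiasedness claim mentioned before the statement. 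It therefore suffices to show that $\mathrm{Var}_Z\big(\frac{1}{N}\sum_i W_i(Z)\big) \to 0$ and then apply Chebyshev's inequality. The null $\mathcal{H}_0$ enters in one place: under it, $Y_i(Z)$ depends on $Z$ only through $e_i(Z)$ (up to the fixed constant $c_i$). Hence $Y_i(Z)$, and so $W_i(Z)$, is a function of $Z$ restricted to the ball of radius $k$ around $i$, since $e_i$ only involves $Z_{\mathcal{N}^{(j)}(i)}$ for $j \le k$.

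\textbf{Step 1: uniform boundedness of $W_i$.} $|Y_i|\le B$ by Assumption \ref{assumption: bdd outcome}. The exposures are (locally) bounded by Assumption \ref{assumption: bdd outcome}. Together with continuity of $\phi$, this gives $|m(e_i)| \le C$ uniformly in $i$ and $N$, so $|W_i| \le (B+C)^2 =: K$.

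I expect this step to be the main technical obstacle. The assumption as stated only gives a local bound $b_z$ on exposures near each $z$, not a global one. I would first handle the case of globally bounded exposures, as in Examples \ref{eg: setting 1}--\ref{eg: setting 3}, where $\phi$ is continuous on a compact set and $C$ follows immediately. For the general case, I would read Assumption \ref{assumption: bdd outcome} as giving uniform-in-$N$ boundedness on the finite set $\{0,1\}^N$. If needed, I would state that reading explicitly, or instead truncate and use a uniform integrability argument.

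\textbf{Step 2: bounding the variance by counting dependent pairs.} Write
\[
\mathrm{Var}\Big(\tfrac{1}{N}\textstyle\sum_i W_i\Big) = \tfrac{1}{N^2}\sum_{i,j}\mathrm{Cov}(W_i,W_j).
\]
Under Assumption \ref{assumption: experiment setup}, the $Z_l$ are i.i.d.\ Bernoulli. So $W_i$ and $W_j$ are independent whenever their radius-$k$ neighborhoods are disjoint, which holds whenever $dist(i,j) > 2k$. By Assumption \ref{assumption: bdd degree}, each $i$ has at most $\sum_{d=0}^{2k}\kappa^d =: D_\kappa$ units within distance $2k$, and $D_\kappa$ does not depend on $N$. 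Every nonzero covariance is at most $K^2$ in absolute value. Hence the variance is at most $N D_\kappa K^2 / N^2 = O(1/N)$.

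\textbf{Step 3: conclusion.} Chebyshev's inequality gives
\[
P\big(|\hat{L}_{e_{obs}}(m) - L_{\mathbb{P}_N,Z}(m)| > \delta\big) \le \frac{D_\kappa K^2}{N\delta^2} \to 0,
\]
which is the claimed convergence in probability.
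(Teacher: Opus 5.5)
Your proof is correct and shares the paper's skeleton. Both arguments use three ingredients.
\begin{itemize}
\item Unbiasedness of $\hat{L}_{e_{obs}}(m)$.
\item A uniform bound $0\le W_i\le K$. The paper gets this as $B''=(B+\|\bar{\beta}\|B')^2$, with $B'$ the supremum of $\|[1,\phi(e)]\|$ over the bounded set of local treatment configurations. That is the same reading of Assumption \ref{assumption: bdd outcome} you propose, so your Step 1 worry is shared by the paper rather than a gap specific to you.
\item The fact that, under $\mathcal{H}_0$, the summands form a dependency graph in which $i$ and $j$ are linked only if $dist(i,j)\le 2k$.
\end{itemize}

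The genuine difference is the concentration step. The paper applies Theorem 2.1 of \cite{janson2004large}, a Hoeffding-type inequality for sums over a dependency graph. It bounds the fractional chromatic number by $\kappa^{2k}+1$ and obtains the exponential tail $2\exp\!\left(-2N\epsilon^2/((\kappa^{2k}+1)B''^2)\right)$. You instead bound the variance by counting dependent pairs and apply Chebyshev, which gives an $O(1/(N\delta^2))$ tail.

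Your route is more elementary and self-contained. It also uses only second moments, so it would survive replacing almost-sure boundedness of $W_i$ by uniformly bounded variances. The paper's route gives exponentially small tails, which matter if you need union bounds over a family that grows with $N$, such as uniform convergence over many $\bar{\beta}$ or a basis dimension $d$ growing with $N$. For this lemma, and for its coordinate-wise reuse with fixed $d$ in Lemma \ref{lemma: argmin consistency}, Chebyshev suffices.

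You are also more explicit than the paper about the role of $\mathcal{H}_0$. Unbiasedness holds by definition without the null. The null is needed only to make each $W_i$ a function of $Z$ restricted to the radius-$k$ ball around $i$, which is what makes distant summands independent.
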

\begin{proof}
    Refer to Appendix \ref{proof: consistency}. 
\end{proof}
We can parameterize the model space with the basis coefficients to obtain $\hat{L}_{e_{obs}}(m) = \hat{L}_{e_{obs}}(\bar{\beta})$ given $\bar{\beta} \in \mathbb{R}^{d+1}$. Thus, Lemma \ref{lemma: consistency} concludes $\hat{L}_{e_{obs}}(\bar{\beta})$ converges in probability to $L_{\mathbb{P}_N,Z}(\bar{\beta})$. This convergence holds under the null hypothesis $\mathcal{H}_0$ as defined in Hypothesis \ref{hypothesis: main residual hypothesis}. We now prove that the corresponding minimizer-argument estimator also converges to the true population minimizer. We note that $argmin_{\bar{\beta}}\: \hat{L}_{e_{obs}}(\bar{\beta})$, by definition, matches the ordinary least squares estimator obtained from linear regression for the basis function.
\begin{assumption}
\label{ass: design matrix pd}
$E_{I,Z}[[1,\phi(e)][1,\phi(e)]^T]$ is positive definite uniformly in $N$. 
\end{assumption}
That is, we assume the design matrix is uniformly invertible as the sample size increases, and hence, admits a positively lower-bounded eigenvalue support. 
\begin{lemma}
\label{lemma: argmin consistency}
We define $\hat{\bar{\beta}}_{obs} = argmin_{\bar{\beta}}\: \hat{L}_{e_{obs}}(\bar{\beta})$, and $\bar{\beta}_0 = argmin_{\bar{\beta}}\: L_{\mathbb{P}_N,Z}(\bar{\beta})$. Then, under Assumption \ref{assumption: experiment setup}, \ref{assumption: bdd outcome}, \ref{assumption: bdd degree}, \ref{ass: design matrix pd}, and $\mathcal{H}_0$
\begin{equation}
    \hat{\bar{\beta}}_{obs} - \bar{\beta}_0 \xrightarrow{p} 0. 
\end{equation}
\end{lemma}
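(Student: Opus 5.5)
The plan is to use the closed form of both minimizers, since the losses are quadratic in $\bar\beta$. Write $X_i(Z) = [1,\phi(e_i(Z))]$. Then
\begin{equation*}
\hat{\bar\beta}_{obs} = \hat A^{-1}\hat b, \quad \hat A = \frac{1}{N}\sum_i X_i X_i^T, \quad \hat b = \frac{1}{N}\sum_i X_i Y_i,
\end{equation*}
all evaluated at $Z_{obs}$. Likewise
\begin{equation*}
\bar\beta_0 = A^{-1} b, \quad A = E_{I,Z}[X_I X_I^T], \quad b = E_{I,Z}[X_I Y_I].
\end{equation*}
Assumption \ref{ass: design matrix pd} guarantees that $A$ is invertible with smallest eigenvalue bounded below by some $\lambda>0$ uniformly in $N$.

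Since inversion is continuous on that set, the statement reduces to two entrywise convergences,
\begin{equation*}
\hat A - A \xrightarrow{p} 0, \qquad \hat b - b \xrightarrow{p} 0.
\end{equation*}
Given these, with probability tending to one $\hat A$ has smallest eigenvalue at least $\lambda/2$. The identity
\begin{equation*}
\hat A^{-1}\hat b - A^{-1}b = \hat A^{-1}(\hat b - b) + \hat A^{-1}(A-\hat A)A^{-1}b
\end{equation*}
then yields the result, because $\|\hat A^{-1}\|\le 2/\lambda$, $\|A^{-1}\|\le 1/\lambda$, and $b$ is bounded. The bound on $b$ comes from Assumption \ref{assumption: bdd outcome} together with continuity of $\phi$ on the bounded range of $e$.

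For the two convergences I would repeat the argument of Lemma \ref{lemma: consistency}, which is the same mechanism applied to the scalar summands $(Y_i - m(e_i))^2$. Expanding $\hat L_{e_{obs}}(\bar\beta)$ as a quadratic form, $\hat L(\bar\beta) = \bar\beta^T\hat A\bar\beta - 2\bar\beta^T\hat b + \frac{1}{N}\sum_i Y_i^2$, and applying Lemma \ref{lemma: consistency} at finitely many well-chosen $\bar\beta$ (polarization) recovers each entry of $\hat A$ and $\hat b$. Alternatively, one can argue directly as follows.
\begin{itemize}
\item Each summand $X_{ij}X_{ik}$ or $X_{ij}Y_i$ is uniformly bounded.
\item Under $\mathcal H_0$, $Y_i$ depends on $Z$ only through $e_i$. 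Since $e_i$ depends only on treatments within distance $k$ of $i$, each summand depends only on $Z$ in a ball of radius $k$.
\item By Assumption \ref{assumption: bdd degree}, that ball has size at most $\sum_{l\le k}\kappa^l$. Summands $i,j$ are therefore independent under the Bernoulli design of Assumption \ref{assumption: experiment setup} unless $dist(i,j)\le 2k$, and each $i$ has $O(\kappa^{2k})$ such $j$.
\item The variance of each average is thus $O(\kappa^{2k}/N)\to 0$, and Chebyshev gives convergence in probability. Unbiasedness holds since $E_Z[\frac1N\sum_i X_iX_i^T] = E_{I,Z}[X_IX_I^T]$.
\end{itemize}

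The main obstacle is the uniformity needed to pass from pointwise convergence to convergence of the argmin. I avoid a general argmax theorem by exploiting quadratic structure: the closed form plus the eigenvalue bound replace uniform convergence over $\bar\beta$. A second delicate point is boundedness of $\phi(e_i)$. The locally bounded exposure condition in Assumption \ref{assumption: bdd outcome} combined with continuity of $\phi$ gives a uniform bound, since $e$ ranges over a finite set of assignments of bounded neighborhoods. I would state this explicitly, as it underpins both the variance bound and $\|b\|<\infty$.
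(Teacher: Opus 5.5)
Your proposal is correct and follows the paper's proof essentially step for step: closed-form OLS solutions, entrywise convergence of $\hat A$ and $\hat b$ (the paper's $\hat\Sigma_N$, $\hat u_N$) from local dependence under $\mathcal{H}_0$, a Weyl-type bound $\lambda_{\min}(\hat A)\ge\lambda/2$ with probability tending to one, and the same perturbation identity using $\|A^{-1}\|\le 1/\lambda$ and bounded $b$. The only difference is the concentration tool. The paper invokes Janson's exponential inequality through the chromatic number of the dependency graph, whereas you use Chebyshev with a covariance count over pairs at distance at most $2k$, or alternatively polarization through Lemma~\ref{lemma: consistency}, and either suffices for convergence in probability.
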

\begin{proof}
    Refer to Appendix \ref{proof: argmin consistency}.
\end{proof}
Here, $\bar{\beta}_0 = [\beta_{0_c},\beta_0]$ for $\beta_{0_c} \in \mathbb{R}$, and $\beta_0 \in \mathbb{R}^d$. Similarly, $\hat{\bar{\beta}}_{obs} = [\hat{\beta}_{obs_c}, \hat{\beta}_{obs}]$ with $\hat{\beta}_{obs_c} \in \mathbb{R}$, and $\hat{\beta}_{obs} \in \mathbb{R}^d$.  
\begin{remark}
    Consider $Z,Z_{obs} \sim_{i.i.d.} P(Z)$, where $(Z_{obs},Y(Z_{obs}))$ is the observed data. Here, $(Z, Y(Z))$ is counterfactual data that is not observed. Then, Lemma \ref{lemma: argmin consistency} shows that both the OLS coefficients $\hat{\bar{\beta}}_{obs}$ and $\hat{\bar{\beta}}(Z, Y(Z))$, obtained from regressing $Y(Z)$ on $\phi(e(Z))$, converge to the same constant ($\bar{\beta}_0$, here) in probability under the null of residual invariance. Here, $\phi(.)$ is any continuous basis function model. Hence, even though direct imputation of residuals is not feasible, asymptotic imputation of OLS coefficients of counterfactuals is feasible using the observed data as $\hat{\bar{\beta}}_{obs}$.    
\end{remark}
As noted above, asymptotic imputation of the OLS coefficients of the counterfactuals is feasible. To recover the counterfactual residual, we assume that the conditional mean response function for each unit lies within a bounded ball around the parametric linear basis model space, as stated below. 
\begin{assumption}
\label{ass: OLS}
(Bounded discrepancy variation)
 There exists $s \geq 0$ such that,   
    \begin{equation}
        \sup_{z\in\{0,1\}^N}1/N\sum_i||E[\tilde{Y}_i|e_i]  - \beta_0\cdot\phi(e_i)||^2 \leq s^2.
    \end{equation}
\end{assumption}
Here, $\beta_0$ is the population OLS minimizer excluding the baseline constant, as defined in Lemma \ref{lemma: argmin consistency}. We now prove the asymptotic validity of Procedure \ref{procedure estimated}. 
\begin{theorem}
\label{theorem: estd asymptotic validity}
    Consider $(Z_{obs}, Y_{obs})$ to be the observed data from a randomized Bernoulli experiment where $Z_{obs} \sim P(Z)$, and $Y(Z_{obs}) = Y_{obs}$. Let $e(Z)$ be an exposure mapping, and a test statistic $T(Z, \hat{\tilde{r}})$, where $\hat{\tilde{r}}$ is the estimated modified residual obtained from the OLS model $Y(Z) \sim [1,\phi(e(Z))]$. We assume that the functions $T_x(y): = T(x,y)$ are uniformly Lipschitz in $y$, with Lipschitz constant $C/\sqrt{N}$ under the null $\mathcal{H}_0$. Consider a null hypothesis that the exposure mapping $e(Z)$ is correctly specified (defined in Hypothesis \ref{hypothesis: main residual hypothesis} as $\mathcal{H}_0$). Then, under Assumption \ref{assumption: experiment setup}, \ref{assumption: bdd outcome}, \ref{assumption: bdd degree}, \ref{ass: design matrix pd}, and \ref{ass: OLS}, Procedure \ref{procedure estimated} holds validity asymptotically with a corrective factor. That is, given $\alpha$
    \begin{equation}
        \lim_{\epsilon\rightarrow 0}\limsup_{N \to \infty} E_{Z_{obs}}[I(\hat{pval}_{\epsilon}(Z_{obs}, \hat{\tilde{R}}_{obs}) \leq \alpha)\: |\: \mathcal{H}_0] \leq \alpha. 
    \end{equation}
    Here, the expectation is with respect to $P(Z_{obs})$, and for any $\epsilon > 0$,
    \begin{equation}
        \hat{pval}_{\epsilon} = P_Z(T(Z,\hat{\tilde{r}}_{obs})\geq T(Z_{obs},\hat{\tilde{r}}_{obs})-C's-\epsilon).
    \end{equation}
    for some $C'>0$. 
\end{theorem}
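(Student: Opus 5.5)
The plan is to compare Procedure \ref{procedure estimated} with the oracle Procedure \ref{procedure oracle}, which is exactly valid by Theorem \ref{theorem: oracle method validity}. Write $S(Z):=T(Z,\tilde R_{obs})$ for the oracle statistic. Under $\mathcal{H}_0$, $\tilde R(Z)$ does not depend on $Z$ (Hypothesis \ref{hypothesis: main residual hypothesis}), so $S$ is a fixed function of $Z$ alone. The estimated statistic is $\hat S(Z):=T(Z,\hat{\tilde r}_{obs}(Z))$, and the claim follows from a uniform bound
\[
\sup_{z}|\hat S(z)-S(z)|\le C's/2+o_p(1).
\]
Given this bound, on the event $A_N=\{\sup_z|\hat S(z)-S(z)|\le C's/2+\epsilon/2\}$ the following inclusion holds:
\[
\{T(z,\hat{\tilde r})\ge T(Z_{obs},\hat{\tilde r})-C's-\epsilon\}\supseteq\{S(z)\ge S(Z_{obs})\}.
\]
Hence $\hat{pval}_\epsilon\ge pval^{oracle}$ on $A_N$, which gives
\[
E[I(\hat{pval}_\epsilon\le\alpha)]\le \alpha+P(A_N^c).
\]
Since $P(A_N^c)\to0$ for each fixed $\epsilon$, taking $\limsup_N$ and then $\epsilon\to0$ yields the statement.

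To obtain the uniform bound, I would use the Lipschitz property of $T$ in its second argument with constant $C/\sqrt N$. This gives $|\hat S(z)-S(z)|\le (C/\sqrt N)\,\|\hat{\tilde r}(z)-\tilde R\|$, so it suffices to show $\frac1{\sqrt N}\|\hat{\tilde r}(z)-\tilde R\|\le s+o_p(1)$ uniformly in $z$, up to a constant absorbed into $C'$.

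Next I would decompose the residual. By construction, $\hat{\tilde r}(z)$ is the residual from regressing $\hat Y(z)=\hat{\tilde R}_{obs}+\hat\beta_{obs}\cdot\phi(e(z))$ on $[1,\phi(e(z))]$. Under $\mathcal{H}_0$, $\tilde R_i=c_i$ and $Y_i(Z_{obs})=c_i+E[\tilde Y_i|e_i(Z_{obs})]$. Therefore
\[
\hat{\tilde R}_{obs}-\tilde R=E[\tilde Y|e_{obs}]-\hat\beta_{obs}\cdot\phi(e_{obs})=\big(E[\tilde Y|e_{obs}]-\beta_0\cdot\phi(e_{obs})\big)+(\beta_0-\hat\beta_{obs})\cdot\phi(e_{obs}).
\]
The first term has normalized $\ell_2$ norm at most $s$ by Assumption \ref{ass: OLS}. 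The second term is $o_p(1)$ by Lemma \ref{lemma: argmin consistency} together with bounded exposures (Assumption \ref{assumption: bdd outcome}), because $\phi$ is continuous on a bounded set and hence $\frac1N\sum_i\|\phi(e_i)\|^2$ is bounded.

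It remains to bound the refit step, i.e. the difference between $\hat{\tilde r}(z)$ and $\hat Y(z)-\hat\beta_{obs}\cdot\phi(e(z))=\hat{\tilde R}_{obs}$. The refit coefficient $\hat\beta(z)$ equals $\hat\beta_{obs}$ plus the projection coefficient of $\hat{\tilde R}_{obs}$ on $\phi(e(z))$. Hence $\hat{\tilde r}(z)$ is the projection of $\hat{\tilde R}_{obs}$ onto the orthocomplement of the span of $[1,\phi(e(z))]$, modulo the intercept convention. This projection is where the main obstacle lies. The intercept and the centering in $T_{GC}$ should make constants irrelevant, but a projection of $\tilde R=c$ onto $\phi(e(z))$ can be non-negligible, since $c_i$ may correlate with the exposures. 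I would handle it by applying the Lipschitz bound to the projected oracle residual instead. Concretely, I would redefine the comparison statistic as $S(z)=T(z,P^\perp_{z}\tilde R)$, which is still a function of $z$ only under $\mathcal{H}_0$, so the oracle argument of Theorem \ref{theorem: oracle method validity} applies verbatim. Because projections are contractions, this gives $\|P^\perp_z(\hat{\tilde R}_{obs}-\tilde R)\|\le\|\hat{\tilde R}_{obs}-\tilde R\|$ uniformly in $z$. Uniformity over $z$ is thus automatic, and Assumptions \ref{assumption: bdd degree} and \ref{ass: design matrix pd} enter only through Lemma \ref{lemma: argmin consistency}.
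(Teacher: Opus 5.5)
Your argument is correct, and it takes a genuinely different and more economical route than the paper's. The paper works with the fitted residual $\hat r=Y(Z)-\hat\beta(Z)\cdot\phi(e(Z))$ of the \emph{unobserved} counterfactual outcome. It applies Lemma~\ref{lemma: argmin consistency} at the counterfactual draw $Z$ to replace $\hat r$ by $r_0$, and introduces the discrepancies $\delta,\delta'$ and the refit coefficient $\hat\beta^r(Z)$. It then bounds $\|\hat\beta^r(Z)-\hat\beta_{obs}\|\le 4B's/\lambda+o_p(1)$ using three ingredients: the population normal equations, a second dependency-graph concentration step for $\frac{1}{N}\sum_i\phi(e_i(Z))r_i$, and a lower eigenvalue bound on $\hat\Sigma_N$ at the counterfactual $Z$. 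A Markov step then converts joint convergence under $P_{Z,Z_{obs}}$ into control of $P_{Z_{obs}}(P_Z(\cdot)>\epsilon)$, giving $\limsup_N\le\alpha+\epsilon$ with $C'=2C(1+2B'^2/\lambda)$.

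You instead note that the refit annihilates $\hat\beta_{obs}\cdot\phi(e(z))$, so the procedure's statistic is $T(z,P_z^\perp\hat{\tilde R}_{obs})$. You compare it with the oracle statistic $z\mapsto T(z,P_z^\perp\tilde R)$, which is a fixed function of $z$ under $\mathcal{H}_0$ and hence exactly valid by Theorem~\ref{theorem: oracle method validity}. Because $P_z^\perp$ is non-expansive, the Lipschitz bound is uniform in $z$ for free. The only stochastic error is $\hat\beta_{obs}-\beta_0$ from the observed fit, Assumption~\ref{ass: OLS} is used only at $Z_{obs}$, and the sandwich $\hat{pval}_\epsilon\ge pval^{oracle}$ holds exactly on a $Z_{obs}$-measurable event.

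This buys you the smaller constant $C'=2C$ and removes the need for eigenvalue control at counterfactual assignments and for the Markov step. It also gives $\limsup_N E[I(\hat{pval}_\epsilon\le\alpha)]\le\alpha$ for every fixed $\epsilon>0$, not only after $\epsilon\to0$. What the paper's route offers in exchange is an explicit statement that the slope from regressing $\hat{\tilde R}_{obs}$ on $\phi(e(Z))$ is $O(s)$ up to $o_p(1)$, i.e.\ that refitting barely moves the coefficient; your proof never needs this.

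Two things to tighten. First, present the argument directly with the projected oracle: with your initial choice $S(z)=T(z,\tilde R_{obs})$ you do not establish the uniform bound you announce. Second, replace ``modulo the intercept convention'' by an explicit requirement that $T$ depend on its residual argument only through its centered version, as $T_{GC}$ and $T^{(k)}_\beta$ do. Without that, the map $\hat{\tilde R}_{obs}\mapsto\hat{\tilde r}(z)$ is a non-orthogonal linear map, and bounding it uniformly in $z$ would bring back an eigenvalue condition.
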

\begin{proof}
Refer to Appendix \ref{proof: estd asymptotic validity}.
\end{proof}
In the above limit, for any small inflation of Type I error, there exists an $\epsilon$ small that controls it at that inflation level. Since inflation is arbitrary, $\epsilon$ can be taken arbitrarily small in practice. Alternatively, one can add tie-breaking mechanisms to deal with probability mass concentration at $\epsilon = 0$ (refer to \cite{lehmann2005testing} or \cite{ritzwoller2024randomization} for further details). The above considers the setting of a single hypothesis test. In Section \ref{section: both applications}, we also see examples of multiple model specification testing procedures. We consider the case of nested hypotheses in Appendix \ref{sec: nested testing}. \cite{zhang2025multiple} demonstrates the feasibility of a general methodology for multiple randomization tests in causal inference.

\subsection{Test statistic and sensitivity analysis}
\label{sec: proc 2 overview}
We now show the estimated graphical correlation test statistic, $T_{GC}^{(k)}$ as defined in Equation (\ref{eq: test stat def general}), in Procedure \ref{procedure estimated}, and then prove that it is uniformly Lipschitz. This verifies the assumption imposed in Theorem \ref{theorem: estd asymptotic validity}. For the validity of a Fisherian-style randomization test, the choice of test statistic is arbitrary. The more sensitive the test statistic is to deviations from the null hypothesis, the greater the method's power will be. Hence, in Theorem \ref{theorem: oracle method validity}, where we prove the validity of the oracle method (Procedure \ref{procedure oracle}), the choice of test statistic is arbitrary. That is, if the (modified) residuals, as defined in Definition \ref{def: modified residual}, were known for the observed data, then Procedure \ref{procedure oracle} controls Type I error rates at a chosen level of significance. Since the residuals are unknown, we substitute for estimated residuals. This opens up two lines of inquiry: studying the asymptotic validity of the randomization testing procedure and studying deviations from the null due to model misspecification of the conditional mean response function. We restrict the test statistic to be a Lipschitz function and show guarantees of a Fisher randomization test. We now present the test statistic proposed for Procedure \ref{procedure estimated} and show that it is Lipschitz with Lipschitz constant $C/\sqrt{N}$. 

\begin{proposition}
\label{prop: test stat lip}
    Consider the test statistic
    \begin{equation}
     \label{eq: test stat moran beta}
     {T_{\beta}}^{(k)}(Z, \tilde{r}) = 
    \frac{\sum_{i} \sum_{j} {G_{ij}}^{(k)}Z_i\tilde{r}_j}{\sqrt{\sum_{i} (\tilde{r}_i)^2}\sqrt{\sum_i Z_i}}.
    \end{equation}
    where $\tilde{r}_i = {Y_i(Z)} - \beta\cdot{\phi(e_i(Z))} - 1/N\sum_i ({Y_i(Z)} - \beta\cdot{\phi(e_i(Z))})\cdot 1_N$. Suppose $\exists\: c_0> 4s^2$ such that $\operatorname{Var}(\tilde{r}_{obs}) > {c_0}$. 
    Then, $T_{\beta}^{(k)}(Z,\tilde{r})$ is Lipschitz in $\tilde{r}$ under the null $\mathcal{H}_0$. That is, $|{T_{\beta}}^{(k)}(Z, \tilde{r}) - {T_{\beta}}^{(k)}(Z, \tilde{r}')| \leq C\frac{||\tilde{r}-\tilde{r}'||}{\sqrt{N}}$ for some $C >0$.
\end{proposition}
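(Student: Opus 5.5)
The plan is to write $T_\beta^{(k)}(Z,\tilde r) = \langle a_Z, \tilde r\rangle / (\|\tilde r\|\,\|Z\|)$, where $a_Z := (G^{(k)})^T Z$ has entries $a_{Z,j} = \sum_i G^{(k)}_{ij} Z_i$. We then bound the difference of two such ratios by the standard add-and-subtract decomposition
\begin{equation}
T(Z,\tilde r)-T(Z,\tilde r') = \frac{\langle a_Z,\tilde r-\tilde r'\rangle}{\|\tilde r\|\,\|Z\|} + \frac{\langle a_Z,\tilde r'\rangle}{\|Z\|}\left(\frac{1}{\|\tilde r\|}-\frac{1}{\|\tilde r'\|}\right).
\end{equation}
The first term is handled by Cauchy--Schwarz, giving at most $\|a_Z\|\,\|\tilde r-\tilde r'\|/(\|\tilde r\|\,\|Z\|)$. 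The second term has absolute value at most $\frac{\|a_Z\|\,\|\tilde r'\|}{\|Z\|}\cdot\frac{|\|\tilde r'\|-\|\tilde r\||}{\|\tilde r\|\,\|\tilde r'\|} \le \|a_Z\|\,\|\tilde r-\tilde r'\|/(\|\tilde r\|\,\|Z\|)$, by the reverse triangle inequality. Altogether,
\begin{equation}
|T(Z,\tilde r)-T(Z,\tilde r')| \le \frac{2\|a_Z\|}{\|Z\|}\cdot\frac{\|\tilde r-\tilde r'\|}{\|\tilde r\|}.
\end{equation}
So the remaining task is to show that $\|a_Z\|/\|Z\|$ is bounded by a constant and that $\|\tilde r\|\ge c\sqrt N$.

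For the first factor, we use Assumption \ref{assumption: bdd degree}. Under this assumption the number of vertices at distance exactly $k$ from any vertex is at most $\kappa^k$ (more precisely $\kappa(\kappa-1)^{k-1}$), so every row sum and column sum of $G^{(k)}$ is at most $\kappa^k$. By the Schur test, the operator norm satisfies $\|G^{(k)}\|_{op}\le \kappa^k$, and hence $\|a_Z\|\le \kappa^k\|Z\|$. This step is routine, and the constant depends only on $k$ and $\kappa$, uniformly in $N$ and $Z$. If $Z=0$, the statistic is undefined or conventionally $0$, and we exclude this case or set $T=0$ there, which is harmless since it has vanishing probability under Bernoulli($p$).

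The main obstacle is the denominator: we need a uniform lower bound $\|\tilde r\|\ge\sqrt{c N}$ on the residual vectors appearing as arguments. Note that $\|\tilde r\|^2/N$ is exactly the empirical variance of the uncentred residual, because $\tilde r$ is already centred. The hypothesis $\operatorname{Var}(\tilde r_{obs})>c_0$ gives $\|\tilde r_{obs}\|\ge\sqrt{c_0N}$. For the residuals $\tilde r(Z)$ arising in Procedure \ref{procedure estimated} at other assignments, we would compare with $\tilde r_{obs}$ under $\mathcal H_0$ as follows. Under $\mathcal H_0$, $Y_i(Z)=c_i+E[\tilde Y_i\mid e_i(Z)]$, so
\begin{equation}
\tilde r(Z) = \big(c + E[\tilde Y\mid e(Z)]-\beta\cdot\phi(e(Z))\big) - \text{mean}.
\end{equation}
The bracketed discrepancy has root-mean-square at most $s$ by Assumption \ref{ass: OLS}, taking $\beta=\beta_0$ (or $\hat\beta_{obs}$, which is asymptotically equal by Lemma \ref{lemma: argmin consistency}). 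Hence $\tilde r(Z)$ and $\tilde r_{obs}$ each lie within $s\sqrt N$ of the centred baseline $c-\bar c$, and so $\|\tilde r(Z)-\tilde r_{obs}\|\le 2s\sqrt N$.

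Combining this with the reverse triangle inequality gives $\|\tilde r(Z)\|\ge(\sqrt{c_0}-2s)\sqrt N$. This bound is strictly positive precisely because $c_0>4s^2$, which explains the form of the hypothesis. We take this as the lower bound on the relevant residual norms, applying it to whichever of $\tilde r,\tilde r'$ appears in the denominator; by symmetry we may choose either one. Plugging this bound and the bound on $\|a_Z\|/\|Z\|$ into the display above yields
\begin{equation}
|T(Z,\tilde r)-T(Z,\tilde r')| \le \frac{C\,\|\tilde r-\tilde r'\|}{\sqrt N}, \qquad C = \frac{2\kappa^k}{\sqrt{c_0}-2s}.
\end{equation}
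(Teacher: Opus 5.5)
Your proof follows the paper's argument almost line for line. The shared steps are:
\begin{itemize}
\item the same add-and-subtract decomposition;
\item the bound $\|G^{(k)}\|_{op}\le\kappa^k$;
\item the comparison $\|\tilde r(Z)\|\ge\|\tilde r_{obs}\|-2s\sqrt N$ under $\mathcal H_0$, via the discrepancy $\beta_0\cdot\phi(e)-E[\tilde Y\mid e]$;
\item the same constant $C=2\kappa^k/(\sqrt{c_0}-2s)$.
\end{itemize}
Your remark that centring is an orthogonal projection, and so cannot enlarge the discrepancy, is a detail the paper skips.

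The one step you handle loosely is which residual the hypothesis $\operatorname{Var}(\tilde r_{obs})>c_0$ refers to.
\begin{itemize}
\item Your comparison is valid only when both residuals are formed with $\beta_0$, because Assumption \ref{ass: OLS} controls the discrepancy only at $\beta_0$.
\item The hypothesis, however, is meant for the observable OLS residual formed with $\hat\beta_{obs}$ (see the remark after the proposition).
\item Your parenthetical appeal to $\hat\beta_{obs}\approx\beta_0$ via Lemma \ref{lemma: argmin consistency} does not bridge this. With $\hat\beta_{obs}$, the discrepancy is bounded only by $s$ plus a term proportional to $\|\hat\beta_{obs}-\beta_0\|$. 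That gives a random constant, and a bound that holds only with probability tending to one rather than a fixed Lipschitz constant.
\end{itemize}

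The paper closes this gap deterministically. Let $u=Y_{obs}-\bar Y_{obs}\mathbf 1$ and let $\tilde e$ be the column-centred regressor matrix. The normal equations give $\tilde e^\top(u-\tilde e\hat\beta_{obs})=0$, so for every $\beta$
\[
\|u-\tilde e\beta\|^2=\|u-\tilde e\hat\beta_{obs}\|^2+(\beta-\hat\beta_{obs})^\top\tilde e^\top\tilde e\,(\beta-\hat\beta_{obs})\ge\|u-\tilde e\hat\beta_{obs}\|^2 .
\]
Hence the centred residual at $Z_{obs}$ built with $\beta_0$ also has norm at least $\sqrt{c_0N}$, and your argument then goes through exactly. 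You should either add this step, or state explicitly that you assume the variance condition for the $\beta_0$-residual. Note that the latter is not a checkable condition, since $\beta_0$ is unknown.
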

\begin{proof}
    Refer to Appendix \ref{proof: lipschitz}.
\end{proof}
Put $k=1$ to obtain the first-order exposure mapping specification test statistic.  
\begin{remark}
    For the test statistic to be well-defined and numerically stable, we assume the empirical variance of the observed linear projection estimate, $\operatorname{Var}(\tilde{r}_{obs})$, to be bounded away from zero. As a practical alternative, the denominator may be truncated below by replacing it with $max(c_0, \operatorname{Var}(\tilde{r}_{obs}))$, where $c_0 > 0$ is of analyst choice (e.g., $c_0 = 0.01$).   
\end{remark}
We treat the discrepancy metric bound $s$ stated in Assumption \ref{ass: OLS} as a sensitivity parameter. As seen in Theorem \ref{theorem: estd asymptotic validity}, we get the $\epsilon$ perturbed p-value as $P_Z(T(Z,\hat{r}_{obs})-T(Z_{obs},\hat{r}_{obs}) \geq - C's-\epsilon)$ (refer Appendix \ref{proof: estd asymptotic validity}). Since $\{ P_Z(T(Z,\hat{r}_{obs})-T(Z_{obs},\hat{r}_{obs}) \geq -\epsilon) > \alpha\}\implies \{P_Z(T(Z,\hat{r}_{obs})-T(Z_{obs},\hat{r}_{obs}) \geq - C's-\epsilon) > \alpha \}$, failure to reject the null stands robust to potential model misspecification in the conditional mean response function when $s>0$. When rejecting the null, the rejection may be due to underlying model misspecification without accounting for the corrective term, which is scaled by the sensitivity parameter $s$, that is, $C's$. Not accounting for this would lead to Type I error inflation, and hence, reporting a rejection of the null should be accompanied by the sensitivity parameter. This can be obtained with linear grid search by observing when the decision flips. Reporting the parameter indicates how much model discrepancy is allowed for the results to remain robust.     


\section{Simulation study}
\label{section: sim study}
In this section, we present a Monte Carlo simulation study of our proposed testing procedure. We present our data-generating process and the general simulation setup. We consider a potential outcome model that is linearly separable in the direct and indirect effects of first- and higher-order variables. In our analysis, we do not incorporate covariates and define the potential outcome using exposure mapping. We consider different exposure-mapping specifications relative to the true exposure mapping from the data-generating process to assess the test's power. Consider the following two potential outcome models, where spillover effects are of first order.
\begin{equation}
\label{eq: outcome_prop}
I: Y_{i}(\mathbf{z}) =
\begin{cases} 
    {\tau_{\text{direct}}} \cdot z_{i} + {\tau_{\text{spill}_1}} \cdot \frac{\sum_{j=1}^{N}G_{ij} \cdot z_{j}}{\sum_{j=1}^{N}G_{ij}} + \epsilon_{i} \qquad \sum_{j=1}^{N}G_{ij} > 0, \\
    {\tau_{\text{direct}}} \cdot z_{i} + \epsilon_{i} \qquad \qquad \qquad \qquad \qquad \quad \text{otherwise}.
\end{cases}
\end{equation}
and
\begin{equation}
\label{eq: outcome_prop_sq}
II: Y_{i}(\mathbf{z}) =
\begin{cases} 
    {\tau_{\text{direct}}} \cdot z_{i} + {\tau_{\text{spill}_1}} \cdot \bigg(\frac{\sum_{j=1}^{N}G_{ij} \cdot z_{j}}{\sum_{j=1}^{N}G_{ij}}\bigg)^2 + \epsilon_{i} \qquad \sum_{j=1}^{N}G_{ij} > 0, \\
    {\tau_{\text{direct}}} \cdot z_{i} + \epsilon_{i} \qquad \qquad \qquad \qquad \qquad \quad \text{otherwise}.
\end{cases}
\end{equation}
Here, $\epsilon_i$ is generated from a standard normal variable and represents a heterogeneous baseline effect when there is no treatment. It is independent of any covariate characteristics. We also consider a potential outcome model in which spillover effects of both first and second orders are present, as shown below. 
\begin{equation}
\label{eq: outcome_prop_sec}
\begin{aligned}
III: Y_{i}(\mathbf{z}) &=
    {\tau_{\text{direct}}} \cdot z_{i} + {{\tau_{\text{spill}}}_1} \cdot \frac{\sum_{j=1}^{N}G_{ij} \cdot z_{j}}{\sum_{j=1}^{N}G_{ij}} I(\sum_{j=1}^N G_{ij} > 0) +\\
    &\hspace{4cm}{{\tau_{\text{spill}}}_2} \cdot \frac{\sum_{j=1}^{N}G_{ij}^{(2)} \cdot z_{j}}{\sum_{j=1}^{N}G_{ij}^{(2)}}I(\sum_{j=1}^N G^{(2)}_{ij} > 0) + \epsilon_{i}.
\end{aligned}
\end{equation}
Consider the following specifications of the simulation setup.
\begin{enumerate}
    \item Experiment: We consider a Bernoulli design in our setup. That is, \\$P(Z = z) = \Pi_i P(Z_i = z_i)$. Assume that $P(Z_i = z_i) = p$. We get that $P(Z=z) = p^{\sum z}(1-p)^{(1-\sum z)}$. In our study, we assume the probability of a unit being treated is $p = 0.5$. 
    \item Population: We fix the population size to $500$ and $1000$. 
    \item Network: We assume the network-generating process follows a small-world network. A small-world network is characterized by two parameters: the number of initial connections, $K$, and the probability of rewiring, $p_{rw}$. This generates a $ K$-regular graph and then rewires its edges with probability $p_{rw}$ to enhance clustering. We take $K = 3$ and $p_{rw} = 0.1$. \\
    We also use a stochastic block model with $5$ blocks of size $100$ (for a population size of $500$) and $5$ blocks of size $200$ (for a population size of $1000$). The preference matrix, which specifies the probability of an edge between two groups, has diagonal entries $\{0.02, 0.01, 0.01, 0.01, 0.03\}$ and all off-diagonal entries as $0.005$.    
    \item Treatment effect: We take $\tau_{direct} = \{0, 2\}$, $\tau_{spill_1}$ as $1.5$ and keep $\tau_{spill_2}$ as $1.2$. 
    \item Exposure mapping: We consider a true exposure mapping of the data-generating process in Equation (\ref{eq: outcome_prop}) and (\ref{eq: outcome_prop_sq}). For $i \in [N]$,
     \begin{equation}
        e_{1i} = \big(Z_i,\frac{1}{|N(i)|} \sum_{j \in N(i)} Z_j\big).
    \end{equation} 
    We first consider the potential mis-specification of not accounting for any spillover effect to test for. For this, we take the exposure mapping below. For $i \in [N]$,
    \begin{equation}
        e_{2i} = (Z_i).    
    \end{equation}
     \item Test statistic: We consider the test statistic ${T_{\beta}}^{(1)}$ and ${T_{\beta}}^{(2)}$ (see Section \ref{sec: proc 2 overview}). 
\end{enumerate}

We generated the null distribution of the test statistic by sampling 500 treatment assignment vectors from a Bernoulli experiment with a probability of being treated equal to 0.5. Our chosen level of significance is 0.05. We calculate the p-value for a two-tailed test. We replicated the above setup 1500 times. We first present the results for the oracle procedure, in which the true value of the observed residual is known. We test for the (correctly specified) exposure mapping $e_1$ for both data-generating processes (DGP) \hyperref[eq: outcome_prop]{I} and \hyperref[eq: outcome_prop_sq]{II}. This gives us Type I error rates. Subsequently, we test for misspecified exposure mapping $e_2$ for DGP~\hyperref[eq: outcome_prop]{I} and \hyperref[eq: outcome_prop_sq]{II}. $e_2$ is misspecified because it fails to account for spillovers. We use this as a first check to demonstrate the proposed test's power. We also consider DGP~\hyperref[eq: outcome_prop_sec]{III}, in which both first- and second-order spillover effects are present. We test whether $e_1$ is correctly specified, i.e., that it accounts only for the first-order effect. Hence, rejection rates again demonstrate the test's power here. We then repeat the same setup for the case in which the observed residuals are unknown. We estimate the residuals using standard linear regression, treating the exposure mappings as regressors, taking the basis maps as identity functions. We impute the counterfactual potential outcome function and re-fit the model to obtain the counterfactual residuals.  

\begin{table}[H]
\caption{The Table displays values of rejection rates for the oracle method under Hypothesis \ref{hypothesis: main residual hypothesis}. The column (I), (II), and (III) represents the data-generating mechanism in Equation (\ref{eq: outcome_prop}), (\ref{eq: outcome_prop_sq}), and (\ref{eq: outcome_prop_sec}), respectively. If not specified, we use the test statistic ${T_{\beta}}^{(1)}$ for the columns. The columns marked with `$\dagger$' shows Type I error.\\[0.05cm]}
\label{table: oracle method}
\centering
{ 
\begin{tabular}{cccccccccc}
\hline
 \multirow{3}{*}{\small \shortstack{ Network \\Distribution}} &  \multirow{3}{*}{\small \shortstack{Direct \\Effect}} &  \multirow{3}{*}{\small \shortstack{Population \\Size}} & \multicolumn{6}{c}{Exposure Mapping} \\
 \cline{4-10}
  &  &  & \multicolumn{3}{c}{(I)} &\multicolumn{3}{c}{(II)} & (III)  \\
  \cmidrule(lr){4-6}\cmidrule(lr){7-9}\cmidrule(lr){10-10}
  &  &  &  ${e_1}^{\dagger}$ & ${e_1(\small{T_{\beta}^{(2)}}})^{\dagger}$ & $e_2$ & ${e_1}^{\dagger}$ & ${e_1(\small{T_{\beta}^{(2)}}})^{\dagger}$& $e_2$ & ${e_1(\small{T_{\beta}^{(2)}}})$  \\[0.1cm]
  \hline
 \multirow{4}{*}{\small \shortstack{Small world \\network}}& 2 & 500 & 0.055 & 0.045 & 1.000 & 0.047 & 0.063 & 1.000 & 0.911\\[0.1cm]
 & 2 & 1000 & 0.046 & 0.052 & 1.000 & 0.048 & 0.057 & 1.000 & 0.997\\[0.1cm]
 & 0 & 500 & 0.043 & 0.055 & 1.000 & 0.034 & 0.047 & 1.000 & 0.912\\[0.1cm]
 & 0 & 1000 & 0.040 & 0.055 & 1.000 & 0.043 & 0.057 & 1.000 & 0.995\\[0.1cm]
\hline
\multirow{4}{*}{\small \shortstack{Stochastic block \\model}}& 2 & 500 & 0.037 & 0.051 & 1.000 & 0.051 & 0.044 & 1.000 & 0.956\\[0.1cm]
 & 2 & 1000 & 0.056 & 0.049 & 1.000 & 0.044 & 0.047 & 1.000 & 0.771\\[0.1cm]
 & 0 & 500 & 0.046 & 0.056 & 1.000 & 0.050 & 0.043 & 1.000 & 0.947\\[0.1cm]
 & 0 & 1000 & 0.053 & 0.055 & 1.000 & 0.045 & 0.045 & 1.000 & 0.784\\[0.1cm]
 \hline
\end{tabular}
}
\end{table}


\begin{table}[H]
\caption{The Table displays values of rejection rates for the estimated residual method under Hypothesis \ref{hypothesis: main residual hypothesis}. As before, the column (I), (II), and (III) represents the data-generating mechanism in Equations (\ref{eq: outcome_prop}), (\ref{eq: outcome_prop_sq}), and (\ref{eq: outcome_prop_sec}), respectively and we take the test statistic to be $T_{\beta}^{(1)}$ unless mentioned otherwise. As before, columns marked with `$\dagger$' show the Type I error rate of the method. 
\\[0.05cm]}
\label{table: est residual method}
\centering
{ 
\begin{tabular}{cccccccccc}
\hline
 \multirow{3}{*}{\small \shortstack{ Network \\Distribution}} &  \multirow{3}{*}{\small \shortstack{Direct \\Effect}} &  \multirow{3}{*}{\small \shortstack{Population \\Size}} & \multicolumn{7}{c}{Exposure Mapping} \\
 \cline{4-10}
  &  &  & \multicolumn{3}{c}{(I)} &\multicolumn{3}{c}{(II)} & (III)  \\
  \cmidrule(lr){4-6}\cmidrule(lr){7-9}\cmidrule(lr){10-10}
  &  &  &  ${e_1}^{\dagger}$ & ${e_1(\small{T_{\beta}^{(2)}}})^{\dagger}$ & $e_2$ & ${e_1}^{\dagger}$ & ${e_1(\small{T_{\beta}^{(2)}}})^{\dagger}$ & $e_2$ & ${e_1(\small{T_{\beta}^{(2)}}})$  \\[0.1cm]
  \hline
 \multirow{4}{*}{\small \shortstack{Small world \\network}}& 2 & 500 & 0.041 & 0.044 & 1.000 & 0.034 & 0.051 & 1.000 & 0.907\\[0.1cm]
 & 2 & 1000 & 0.041 & 0.049 & 1.000 & 0.041 & 0.049 & 1.000 & 0.996\\[0.1cm]
 & 0 & 500 & 0.044 & 0.047 & 1.000 & 0.057 & 0.037 & 1.000 & 0.905\\[0.1cm]
 & 0 & 1000 & 0.043 & 0.060 & 1.000 & 0.046 & 0.045 & 1.000 & 0.997\\[0.1cm]
\hline
\multirow{4}{*}{\small \shortstack{Stochastic block \\model}}& 2 & 500 & 0.040 & 0.037 & 1.000 & 0.045 & 0.043 & 1.000 & 0.957\\[0.1cm]
& 2 & 1000 & 0.047 & 0.047 & 1.000 & 0.048 & 0.041 & 1.000 & 0.763\\[0.1cm]
 & 0 & 500 & 0.048 & 0.044 & 1.000 & 0.043 & 0.051 & 1.000 & 0.959\\[0.1cm]
 & 0 & 1000 & 0.049 & 0.048 & 1.000 & 0.046 & 0.045 & 1.000 & 0.767\\[0.1cm]
 \hline
\end{tabular}
}
\end{table}

\subsection{Power curve and functional misspecification} 
We validate that Procedure \ref{procedure estimated} controls Type I error at a given significance level of choice in Table \ref{table: oracle method} and Table \ref{table: est residual method}. The power properties displayed reflect an exposure-mapping misspecification when a spillover effect (at varying depths) is incorrectly not detected. This includes failing to account for first-order spillover effects and, similarly, for second-order spillover effects when they are present in the data-generating mechanism. Both represent exposure-mapping misspecification, leading to a high rejection rate and demonstrating the method's power properties. More specifically, the exposure-mapping misspecifications considered provide no information about any form of treatment effect (either first-order or second-order). In this section, we investigate the method's power properties when the functional form of the exposure mapping is misspecified. We consider the exposure mapping that takes as input an indicator of whether a treated neighbor exists to be the correctly specified exposure mapping in the null statement. The true data-generating process, that is, DGP~\hyperref[eq: outcome_prop]{I}, has exposure mapping given by the proportion of treated neighbors. We present the exposure mapping considered in the null formally below.
\begin{equation}
    \label{eq: indicator exp mapping}
    {e_3}_i = (Z_i, \mathcal{I}( \sum_{j \in \mathcal{N}(i)} Z_j > 0)).
\end{equation}
\begin{figure}[t]
    \centering
    \includegraphics[width=0.7\textwidth]{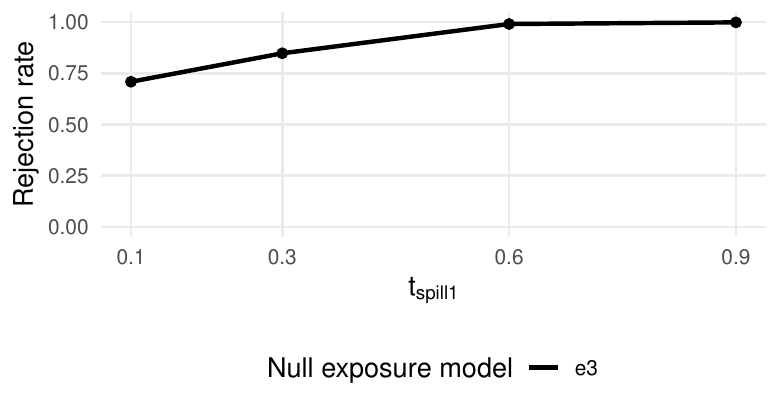}
    \caption{The Figure presents power properties of Procedure \ref{procedure estimated} under exposure mapping functional form misspecification. The rejection rates are plotted against the size of the first-order spillover effect as a power curve for the null exposure model.}
    \label{fig: power curve corr refit}
\end{figure}
We take $\tau_{direct} = 2$, and vary $\tau_{spill_1}$ between $0.1$ and $0.9$. We set the sample size to $500$ and obtain the power curve. The results are presented in Figure \ref{fig: power curve corr refit}. We implemented Procedure \ref{procedure estimated} (estimated residual randomization test) by sampling $500$ treatment assignment vectors to generate the null distribution. The graph in the data-generating process is coming from a small-world network, with the same parameters as before (number of initial connections, $K = 3$, and edge-rewiring probability, $p_{rw} = 0.1$). 

\subsection{Discussion}
Table \ref{table: oracle method} discusses results for the oracle procedure when true residuals are known. Table \ref{table: est residual method} presents results for the estimated residual method (Procedure \ref{procedure estimated}), where we estimate the residuals using linear regression. In both Tables \ref{table: oracle method} and \ref{table: est residual method}, Column $e_1$ for DGP~\hyperref[eq: outcome_prop]{I} and \hyperref[eq: outcome_prop_sq]{II} shows the Type I error of the method(s). DGP~\hyperref[eq: outcome_prop]{I} is a linear function of the exposure mapping $e_1$, and DGP~\hyperref[eq: outcome_prop_sq]{II} is a non-linear (quadratic) function of the exposure mapping $e_1$. We see that the Type I error is appropriately controlled at the chosen significance level of $0.05$ in the study. We note that no correction term was added to the procedure. When we test the procedure for $e_2$, which has only a direct-effect component, $Z_i$, and no spillover information, we observe high rejection rates. Thus, failure to account for the spillover effect (first order in the case of DGP~\hyperref[eq: outcome_prop]{I} and \hyperref[eq: outcome_prop_sq]{II}) is detected with high power. In the case of DGP~\hyperref[eq: outcome_prop_sq]{II}, we see similar power to DGP~\hyperref[eq: outcome_prop]{I}, despite the non-linearity and the use of standard linear regression for estimation. This again highlights how the method, without the addition of a corrective term, remains robust to Type I error inflation, even when some forms of model misspecification of the conditional response function are present. The last column test for exposure mapping $e_1$, which includes the proportion of treated neighbors as a component, was applied to the DGP~\hyperref[eq: outcome_prop_sec]{III}. DGP~\hyperref[eq: outcome_prop_sec]{III} contains both first-order and second-order spillover effects. Failing to account for the second-order spillover effect in exposure mapping constitutes model misspecification. We see our method performs well in this case as well, though it shows a drop in power compared to testing for the first-order spillover misspecification. This may be attributed to lower heterogeneity in the second-order spillover effect within the population. An interesting finding emerges when generating a graph from the Stochastic Block Model (SBM): a consistent drop in power as the sample size increases, more evident in Table \ref{table: oracle method}. We report that the average degree of an SBM instance generated in the study with 500 units is 3.6. When the sample size is increased to 1000 units, the average degree of a simulated SBM instance increases to 7.1. This renders a plausible reason that an increase in density reduces variability within the population, potentially leading to lower power, and also violates Assumption \ref{assumption: bdd degree}, which assumes a bounded degree of the graph. This is in contrast to the small-world network model, where the average degree stays the same (and equal to 3) with the increase in sample size (in line with Assumption \ref{assumption: bdd degree} of the maximum degree of the graph bounded by a constant), and we have an increasing power curve. We also consider a misspecification in the functional form of exposure mapping, as shown in Figure \ref{fig: power curve corr refit}. Here, both the hypothesized exposure mapping and the true data-generating mechanism are functions of first-order (treated) neighbors, but differ in their functional forms. We plot a power curve in Figure \ref{fig: power curve corr refit} where the null exposure mapping in consideration is a binary indicator of the presence of a treated neighbor. True exposure mapping is the same as that of DGP~\hyperref[eq: outcome_prop]{I}: the proportion of treated neighbors. The sample size is 500, and the graph is generated from the small-world network. We observe a strong power performance of the method under functional-form misspecification, which we plot for varying spillover effect sizes.\\
\begin{figure}[t]
    \centering
    \includegraphics[width=0.7\textwidth]{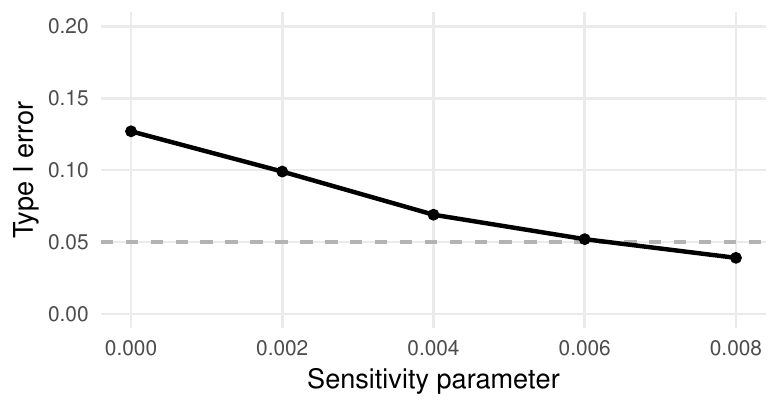}
    \caption{The Figure presents rejection rates of Procedure \ref{procedure estimated} for DGP~\hyperref[eq: outcome_prop_degree]{IV}. The rejection rates are plotted against the sensitivity parameter (scaled up to a constant), which is $C's$. Here, the sample size is 500, and the graph is generated from a small-world network model.}
    \label{fig: sensitivity plot}
\end{figure}
We now present a DGP in which we observe Type I error inflation due to model misspecification. We perform a grid search to empirically observe the Type I error rate for varying values of the sensitivity parameter, $s$, scaled by a constant in Figure \ref{fig: sensitivity plot}. Consider
\begin{equation}
\label{eq: outcome_prop_degree}
IV: Y_{i}(\mathbf{z}) =
\begin{cases} 
    {\tau_{\text{direct}}} \cdot z_{i} + (\tau_{\text{spill}_1}+\frac{deg_i}{N^{-1}\sum_j deg_j}) \cdot \frac{\sum_{j=1}^{N}G_{ij} \cdot z_{j}}{\sum_{j=1}^{N}G_{ij}} + \epsilon_{i} \qquad \sum_{j=1}^{N}G_{ij} > 0, \\
    \tau_{\text{direct}} \cdot z_{i} + \epsilon_{i} \qquad \qquad \qquad \qquad \qquad \quad \text{otherwise}.
\end{cases}
\end{equation}
Here, $deg_i= \sum_j G_{ij}$ is the degree of a unit $i$, and remaining parameters are same as DGP~\hyperref[eq: outcome_prop]{I}, with $\tau_{direct} = 2$, and $\tau_{spill_1} = 1.5$. We start with an uncorrected testing procedure and do a grid search over varying values of corrective factors (corresponding to $C's$) to obtain the sensitivity curve of Type I error inflation. $C' = 2C(1+2B'^2/\lambda)$ is a function of the Lipschitz constant, exposure mapping bound, and minimum eigenvalue of the design matrix, as shown in Appendix \ref{proof: estd asymptotic validity}. 

\section{Empirical Application}
\label{section: application}

We illustrate our methodology by re-analyzing the experimental dataset from \cite{paluck2016changing}, which examines the effect of anti-conflict norms on antagonistic behaviors among American middle school students. The authors set out to examine how peer influence can contribute to the spread of pro-social behavior to counter bullying, harassment, rumor-mongering, and social exclusion. In the experimental design, 28 of 56 schools were randomized to host the anti-conflict behavioral intervention program. About 3 weeks before the start of the experiment, all students at the 56 schools were asked to complete a survey. The questions would pertain to whom the student spends the most time with at school, to whom the student talks at school, etc., to map out the social network among school students. Refer to \cite{paluck2016changing} for further details on the experiment. Among the treated schools, the authors identified highly connected students in each school (about $10\%$ of the school's top nominations) and labeled them `social referents'. These social referents from the treated schools are, then again, randomized to treatment, with $50\%$ randomly invited to participate in an extensive bi-monthly meeting promoting anti-bullying and anti-conflict behavior. The non-social-referent students have a probability of zero of being in the treated group. The authors report the success of the anti-conflict intervention program by noting that students wore wristbands distributed as part of the program to encourage anti-conflict behaviors. \\
 
\begin{figure}[ht]
    \centering
     \hspace{-0.9cm}
    \includegraphics[width=0.45\textwidth, trim=4cm 2cm 4.1cm 2cm,
  clip]{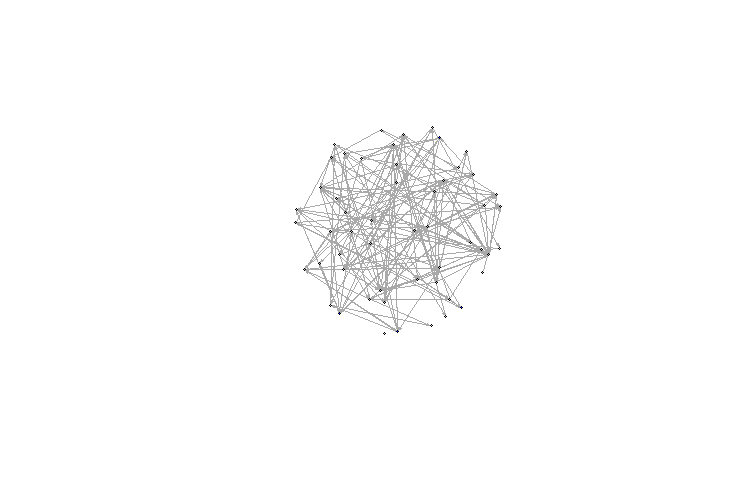}
    \hspace{0.4cm}
    \includegraphics[width=0.45\textwidth]{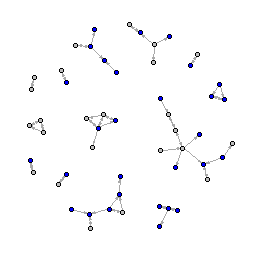}
    \caption{The above images are from one sample school in the data. The left image shows the full social network among the students, and the right image shows only the network restricted to social referents in the school. The blue units represent treated students, and the rest are in the control group.}
\end{figure}
Conditional on the first-stage randomization, we take the social referents from the treated schools as our population of interest. As part of our analysis, we removed all isolated nodes, resulting in a sample size of 850. We use a block-randomized trial to generate a null randomization distribution for testing model misspecification and include school fixed effects as a regressor to approximate a completely randomized trial. The average out-degree in the network for the sample is 1.14. 
\begin{figure}[htbp]
    \centering
    \includegraphics[width=0.8\textwidth]{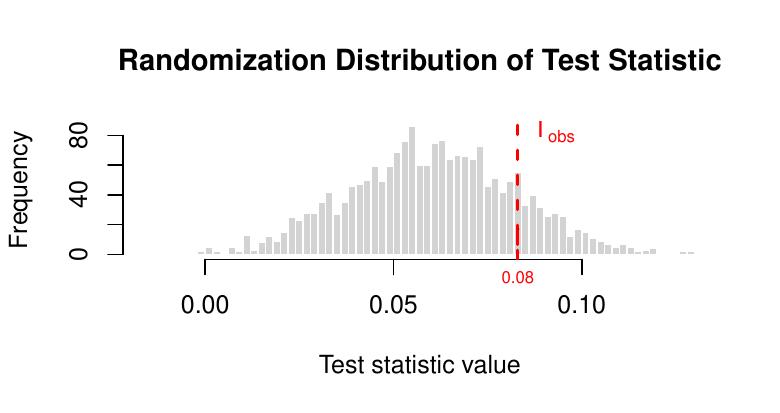}
    \caption{The plot is the randomization distribution of the test statistic, $T_{\beta}^{(1)}$. The null hypothesis considers $e = (Z_i, \mathcal{I}(\sum_{j \in \mathcal{N}(i)} Z_j > 0))$ to be correctly specified. We approximate the null distribution by taking 2,000 samples of the treatment assignment vector, $Z$, and plot the observed test statistic in Red.}
    \label{fig: test stat dist}
\end{figure}

\cite{paluck2016changing}, and subsequently re-analyzed by \cite{aronow2017estimating}, consider the exposure mapping to be $e_i= (Z_i, \mathcal{I}(\sum_{j \in \mathcal{N}(i)} Z_j > 0))$. Thus, the indication of whether there exists a treated friend in the school is assumed to capture the spillover mechanism in the population. We implement the specification test developed in the paper, namely Procedure \ref{procedure estimated} with the test statistic $T_{\beta}^{(1)}$, to test whether the exposure mapping $e$ is correctly specified. We generated the null distribution by taking 2,000 samples of the treatment assignment vector and obtained a p-value of 0.317. We plot the randomization distribution of the test statistic in Figure \ref{fig: test stat dist}. Since we fail to reject the null without applying a corrective term to account for model misspecification of the conditional linear response function, the decision is robust to such misspecification. That is, the addition of a corrective term to account for model-misspecification will increase the p-value estimate, which we also fail to reject. We also report p-values obtained for the null exposure mapping $e_2 = (Z)$, and for $e$, with the test statistic $T_{\beta}^{(2)}$. We obtain 0.055 and 0.799, respectively. A careful analysis of the selection of exposure mapping is beyond the scope of the paper, and we discuss sequential depth testing in Appendix \ref{sec: nested testing}.\\

\section{Conclusion}
\label{section: conclusion}
In this paper, we develop a randomization-based testing procedure to assess the correct specification of causal spillover mechanisms. The focus of the work has been on randomized experiments, in which the analyst knows the mechanism by which treatments are assigned. To this end, we build the testing framework and provide theoretical guarantees on the validity of the tests within the design framework. We use the linear projection of the conditional mean function to estimate the residuals. A key result of the paper is the asymptotic convergence of the OLS coefficients in a design sense if the exposure model is correctly specified. It is important to study the properties of other regression models from a design perspective, accounting for their dependence on the population. While the theory is built for linear basis model regression, we restrict our attention to the standard linear regression setting in the numerical studies. It will be an important extension to study, principally, how to choose the basis model and better calibrate the sensitivity parameter. Another interesting theoretical direction is to examine theoretical power properties of the procedure, and how departures of the conditional mean function from the assumed model space affect the performance of the randomization testing procedure. \cite{gao2026impossibility} presents an impossibility result for a general alternative model class, and it is an interesting extension to see if the impossibility result also holds if the alternative model space is restricted to a subclass, for instance, potential outcomes with a bounded degree of the network as considered in this work. Theoretical study of power properties will also aid in developing a principled procedure for selecting exposure models. Our procedure's performance drops in dense graph settings, creating a methodological gap to improve upon. A key contribution of the work has been the development of a framework for defining correctly specified causal inference models. Although this work focuses on interference settings, it would be interesting to investigate the proposed model-testing framework in other settings, including randomized experiments and observational studies.  

\vspace{0.25cm}
{\small
\textbf{Acknowledgments.} We thank Fabrizia Mealli, Fredrik Sävje, Qingyuan Zhao, JSM 2025 conference participants, and EuroCIM 2026 participants for their feedback. We gratefully acknowledge Hemanth Kumar at the ISB Institute of Data Science and, subsequently, Sarang Deo and the HPC Team at ISB for developing and providing access to the high-performance cluster.}

\bibliographystyle{plainnat}  
{\small \bibliography{references} }

\newpage
\begin{appendices}

\appendix  
\begin{center}
    {\Huge\appendixname}
\end{center}
\section{Proofs}
\subsection{Proof of Theorem \ref{theorem: oracle method validity}}
\label{proof: oracle method validity}
Consider $Z \sim P(Z)$. We define p-value as 
\begin{equation}
    pval = P(T(Z, R(Z))>T(Z_{obs}, R_{obs})). 
\end{equation}
Consider a variable $U$ with the same distribution as $T(Z, R(Z))$. Thus, $U$ is a random variable with the distribution of $U$ induced by $P(Z)$. This implies
\begin{equation}
    pval = 1- F_U(T(Z_{obs},R_{obs})). 
\end{equation}
Here, $F_U(.)$ is the cumulative distribution function of the variable U. That is,\quad $F(U \leq u) = P(U \leq u)$. \\
Since $R_i(Z) = R_i(Z_{obs})$ under the null $\mathcal{H}_0$, we get
\begin{equation}
\begin{aligned}
     T(Z, R(Z)) &= T(Z, R(Z_{obs}))\quad \forall Z \in \{0,1\}^N\: under \: \mathcal{H}_0,\\
     &\stackrel{d}{=} T(Z_{obs}, R_{obs}).
\end{aligned}
\end{equation}
Thus, the distribution of $U$ is the same as the distribution of $T(Z_{obs}, R_{obs})$. Here, the randomness is induced by $Z_{obs} \sim P(Z)$. Using this, we get
\begin{equation}
    pval = 1 - F_U(U).
\end{equation}
Using the probability integral transform theorem, we conclude
\begin{equation}
    P(pval \leq \alpha) \leq \alpha, 
\end{equation}
for a given $\alpha \in (0,1)$. 
\qed

\subsection{Proof of Lemma \ref{lemma: MSE minimizer}}
\label{proof: MSE minimizer}
Consider $I \sim Unif\{1,N\}$. Then
\begin{equation}
    \begin{aligned}
    L_{\mathbb{P}_N,Z} &= min_{m \in \mathcal{M}^N}  E_{I,Z}[(Y_I(Z) - m_I(e_I))^2],\\
         &= min_{m \in \mathcal{M}^N} E_Z[E_{I}[(Y_I(Z)-m_I(e_I))^2]],\\
        &= min_{m \in \mathcal{M}^N} E_Z[\frac{1}{N}\cdot \sum_i(Y_i(z)-m_i(e_i))^2|Z=z],\\
        &= min_{m \in \mathcal{M}^N} \frac{1}{N}\cdot \sum_i E_Z[(Y_i(z)-m_i(e_i))^2|Z=z].
    \end{aligned}
\end{equation}
Consider $L_i(z)= E_Z[(Y_i(z)-m_i(e_i))^2|Z=z]$. Then, 
\begin{equation}
    \begin{aligned}
        L_i(z) &= E_Z[((Y_i(z) - E_Z[Y_i|e_i]) + (E_Z[Y_i|e_i] - m_i(e_i)))^2 | Z=z], \\
        &= E_Z[(Y_i(z)-E_Z[Y_i|e_i])^2 + (E_Z[Y_i|e_i] - m_i(e_i))^2 \\
        &\qquad\qquad+ 2(Y_i(z)-E_Z[Y_i|e_i])\cdot(E_Z[Y_i|e_i] - m_i(e_i))|Z=z],\\
        &= E_Z[(Y_i(z)-E_Z[Y_i|e_i])^2|z] + E_Z[(E_Z[Y_i|e_i] - m_i(e_i))^2|Z=z] \\
        &\qquad\qquad + 2E_Z[(Y_i(z)-E_Z[Y_i|e_i])\cdot(E_Z[Y_i|e_i] - m_i(e_i))|Z=z]. 
    \end{aligned}
\end{equation} 
Consider the cross-over term we define as $C_i = 2E_Z[(Y_i(z)-E_Z[Y_i|e_i])\cdot(E_Z[Y_i|e_i] - m_i(e_i))|Z]$. Then, 
\begin{equation}
\begin{aligned}
    C_i &=2E_Z[(Y_i(z)-E_Z[Y_i|e_i])\cdot(E_Z[Y_i|e_i] - m_i(e_i))|Z=z], \\
        &= 2E_e[E_{Z|e}[(Y_i(z)-E_Z[Y_i|e_i]|e_i)\cdot(E_Z[Y_i|e_i] - m_i(e_i)|e_i)]],\\
        &= 2E_e[(E_Z[Y_i|e_i] - m_i(e_i)|e_i)\cdot E_{Z|e}[(Y_i(z)-E_Z[Y_i|e_i])]],\\
        &= 2E_e[(E_Z[Y_i|e_i] - m_i(e_i)|e_i)\cdot (E_Z[Y_i|e_i]-E_Z[Y_i|e_i])],\\
        &= 2E_e[(E_Z[Y_i|e_i] - m_i(e_i)|e_i)\cdot 0],\\
        &= 0. 
\end{aligned}
\end{equation}
Thus, 
\begin{equation}
    \begin{aligned}
         L_i(z) &=  E_Z[(Y_i(z)-E_Z[Y_i|e_i])^2|z] + E_Z[(E_Z[Y_i|e_i] - m_i(e_i))^2|Z=z].
    \end{aligned}
\end{equation}
Therefore, $argmin_m\; L_i(z) = argmin_m\; E_Z[(E_Z[Y_i|e_i] - m_i(e_i))^2|Z=z]$. If $\exists\: m_i \in \mathcal{M}_i$ s.t. $E_Z[Y_i|e_i] = m_i(e_i)$, then $argmin_m L_i(z) = E_Z[Y_i|e_i]$ a.s.
\qed

\subsection{Proof of Lemma \ref{lemma: consistency}}
\label{proof: consistency}
Define a random variable $E = e(Z)$. Consider $I \sim Unif\{1,N\}$. Then, under the null $\mathcal{H}_0$ 
\begin{equation}
    \begin{aligned}
        L_{\mathbb{P}_N,Z}(m) &= E_{I,Z}[(Y_I(z) - m(e_I))^2], \\
        &= \frac{1}{N}\sum_i\sum_z(Y_i(z)-m(e_i(z))^2\cdot P(Z=z),\\
        &= \frac{1}{N}\sum_i\sum_z(Y_i(e_i(z))-m(e_i(z))^2\cdot P(Z=z),\\
        &= \frac{1}{N}\sum_i\sum_{e_i}(Y_i(e_i)-m(e_i))^2\cdot P(E_i=e_i).\\
    \end{aligned}
\end{equation}
Take $e_{obs} \sim P(E)$. We note that this is induced by the probability distribution of $Z$. 
\begin{equation}
    \begin{aligned}
        \hat{L}_{e_{obs}}(m) &= \frac{1}{N}\cdot \sum_i(Y_i(e_{{obs}_i}) - m(e_{{obs}_i}))^2,\\
        &= \frac{1}{N}\cdot \sum_i\sum_{e_i}(Y_i(e_i) - m(e_i))^2\cdot I(e_{{obs}_i} = e_i). 
    \end{aligned}
\end{equation}
Consider $e_{obs}$ as a random variable, with distribution $P(E)$.  
\begin{equation}
\begin{aligned}
    E[\hat{L}_{e_{obs}}(m)] &= \frac{1}{N}\cdot \sum_i\sum_{e_i}(Y_i(e_i) - m(e_i))^2\cdot P(e_{{obs}_i} = e_i),\\
    &= \frac{1}{N}\cdot \sum_i\sum_{e_i}(Y_i(e_i) - m(e_i))^2\cdot P(E_i = e_i),\\
    &= L_{\mathbb{P}_N,Z}(m).
\end{aligned}
\end{equation}
Hence, $\hat{L}_{e_{obs}}(m)$ is an unbiased estimator of $L_{\mathbb{P}_N,Z}(m)$. We define $\hat{L}_i(m) = (Y_i(e_{{obs}_i}) - m(e_{{obs}_i}))^2$. Since $m \in \mathcal{M}_e$ such that $\exists\: \bar{\beta}: m = \bar{\beta}\cdot [1, \phi(e)]$, we get \\
$\hat{L}_i(m) = \hat{L}_i(\bar{\beta})= (Y_i(e_{{obs}_i}) - \bar{\beta} \cdot ([1, \phi(e_{{obs}_i})]))^2$. \\
We have $|Y_i| \leq B$, and $e_G(Z) = e_G(Z_i, Z_{N_i},..., Z_{{N_i}^k})$ we get
\begin{equation}
    ||(Z_i, Z_{N_i},..., Z_{{N_i}^k})|| \leq \sqrt{1+ \Delta + ... + \Delta^k}\leq\sqrt{1+\kappa +...+ \kappa^k},
\end{equation}
using Assumption \ref{assumption: bdd outcome}, \ref{assumption: bdd degree}. Take $B' = \sup_{z \in \{0,1\}^N:\: ||z|| \leq \sqrt{1+\kappa +...+ \kappa^k}} \;||[1,\phi(e(z)]||$. Then, $B'<\infty$ since $e(.)$ is bounded over a compact domain and $\phi(.)$ is continuous. This implies
\begin{equation}
    0 \leq \hat{L}_i \leq (B + ||\bar{\beta}||B')^2 = B'',
\end{equation}
given $\bar{\beta}$. \\
Therefore, using Theorem 2.1 from \cite{janson2004large}, we get that given $\epsilon >0$, 
\begin{equation}
\label{eq: janson ieq}
    P\big(\big|\frac{1}{N}\sum_i \hat{L}_i - E[\frac{1}{N}\sum_i \hat{L}_i]\big| > \epsilon\big) \leq 2 \exp\!\left(\frac{-2N\epsilon^2}{\chi^*(L_G)B''^2}\right).
\end{equation}
Here, $\chi^*(L_G)$ represents the fractional chromatic number of the dependency graph of $\{\hat{L}\}_i$, denoted by $L_G$, and $\chi(L_G)$ represents it's chromatic number. We know that 
\begin{equation}
\label{eq: chromatic}
    \chi^*(L_G) \leq \chi(L_G) \leq \Delta^{2k} + 1 \leq \kappa^{2k}+1.
\end{equation}
Using Equation (\ref{eq: chromatic}) in Equation (\ref{eq: janson ieq}), we get 
\begin{equation}
    P\big(\big|\frac{1}{N}\sum_i \hat{L}_i - E[\frac{1}{N}\sum_i \hat{L}_i]\big| > \epsilon\big) \leq 2 \exp\!\left(\frac{-2N\epsilon^2}{(\kappa^{2k}+1)B''^2}\right).
\end{equation}
Hence, 
\begin{align}
    \frac{1}{N}\sum_i \hat{L}_i - E\big[\frac{1}{N}\sum_i \hat{L}_i\big] & \xrightarrow{p} 0,\nonumber\\
    \implies \hat{L}_{e_{obs}}(\bar{\beta}) - L_{\mathbb{P}_N,Z}(\bar{\beta}) &\xrightarrow{p} 0. 
\end{align}
\qed

\subsection{Proof of Lemma \ref{lemma: argmin consistency}}
\label{proof: argmin consistency}
 Consider the OLS closed-form solution of the linear projection coefficient
 \begin{equation}
 \begin{aligned}
     &\bar{\beta}_0 = argmin_{\bar{\beta}}L_{\mathbb{P},N}(\bar{\beta}),\\
     \implies\\
     &[\frac{1}{N}\sum_{i=1}^{N}E_{Z}[1,\phi(e_i)][1,\phi(e_i)]^T]\cdot\bar{\beta}_0 = \frac{1}{N}\sum_{i=1}^{N}E_Z[[1,\phi(e_i)]Y_i],
 \end{aligned}    
 \end{equation}
 and the corresponding sample estimator
 \begin{equation}
     \begin{aligned}
         &\hat{\bar{\beta}}_{obs} = argmin_{\bar{\beta}}\hat{L}_{e_{obs}}(\bar{\beta}),\\
         \implies\\
         &[\frac{1}{N}\sum_{i=1}^{N}[1,\phi({e_{obs}}_i)][1,\phi({e_{obs}}_i)]^T]\cdot\hat{\bar{\beta}}_{obs} = \frac{1}{N}\sum_{i=1}^{N}[1,\phi({e_{obs}}_i)]{Y_{obs}}_i.
     \end{aligned}
 \end{equation}
 Define $\hat{\Sigma}_N = \frac{1}{N}\sum_{i=1}^{N}[1,\phi({e_{obs}}_i)][1,\phi({e_{obs}}_i)]^T$, and $\Sigma_N:=E_Z[\hat{\Sigma}_N] = \frac{1}{N}\sum_{i=1}^{N}E_{Z}[1,\phi(e_i)][1,\phi(e_i)]^T$. Also, define $\hat{u}_N = \frac{1}{N}\sum_{i=1}^{N}[1,\phi({e_{obs}}_i)]{Y_{obs}}_i$, and $u_N = E_Z[\hat{u}_N] = \frac{1}{N}\sum_{i=1}^{N}E_Z[[1,\phi(e_i)]Y_i]$.\\
 Hence, $\Sigma_N\bar{\beta_0} = u_N$, and $\hat{\Sigma}_N\hat{\bar{\beta}}_{obs} = \hat{u}_N$. Since ${\hat{\Sigma}}_{p,q} = 1/N\sum_{i=1}^{N}\phi({e_i}_p)\phi({e_i}_q)$ for some $p,q \in [d+1]$, taking ${e_i}_1 = 1$. Then, in a similar fashion as Lemma \ref{lemma: consistency}, we use Theorem 2.1 from \cite{janson2004large} , to obtain 
 \begin{equation}
 \label{eq: matrix conv}
     {\hat{\Sigma}}_{p,q} - {\Sigma}_{p,q} \xrightarrow{p} 0 \quad \forall\: {p,q}\in[d+1]. 
 \end{equation}
 This is using $|\phi({e_i}_p)\phi({e_i}_q)| \leq {B'}^2$, where $B'$ is as defined in Lemma \ref{lemma: consistency}, and the chromatic number of the dependency graph is upper bounded by $\kappa^{2k}+1$. Using Equation (\ref{eq: matrix conv}), we get
 \begin{equation}
 \begin{aligned}
     \label{eq: matrix op norm conv}
     ||\hat{\Sigma}_N - \Sigma_N||_{op} &\leq \sqrt{\sum_{p=1}^{d+1}\sum_{q=1}^{d+1}(\hat{\Sigma}_{p,q} - \Sigma_{p,q})^2},\\ 
     &\xrightarrow{p} 0,
 \end{aligned}    
 \end{equation}
 using the continuous mapping theorem. \\
 Under the null $\mathcal{H}_0$, we also get 
 \begin{equation}
     \label{eq: vector conv}
     \hat{u}_q - u_q \xrightarrow{p} 0 \quad\forall\:q\in[d+1],
 \end{equation}
 using $|\phi({e_i}_p)Y_i| \leq BB'$, and the same dependency graph bound as before. Consider
 \begin{equation}
 \begin{aligned}
     \hat{\Sigma}_N(\hat{\bar{\beta}}_{obs} - \bar{\beta}_0) &= \hat{u}_N - \hat{\Sigma}_N\bar{\beta}_0,\\
     &= \hat{u}_N - (\hat{\Sigma}_N - \Sigma_N + \Sigma_N)\bar{\beta}_0,\\
     &= \hat{u}_N - u_N + u_N - (\hat{\Sigma}_N - \Sigma_N)\bar{\beta}_0 - \Sigma_N\bar{\beta}_0,\\
     &= \label{eq: beta equality}(\hat{u}_N - u_N) - (\hat{\Sigma}_N - \Sigma_N)\bar{\beta}_0.   
 \end{aligned}    
 \end{equation}
 Under Assumption \ref{ass: design matrix pd}, we have there exists $\lambda > 0$, such that $\lambda_{min}(\Sigma_N) \geq \lambda$ for all $N$. Using Weyl's inequality, we know that for two symmetric matrices $S_1$ and $S_2$, 
 \begin{equation}
     \lambda_{min}(S_1+S_2) \geq \lambda_{min}(S_1) + \lambda_{min}(S_2). 
 \end{equation}
 Take $S_1 = \Sigma_N$, and $S_2 = \hat{\Sigma}_N - \Sigma_N$, and we get
 \begin{equation}
 \label{eq: lambda lower bdd}
 \begin{aligned}
     \lambda_{min}(\hat{\Sigma}_N) &\geq \lambda_{min}(\Sigma_N) + \lambda_{min}(\hat{\Sigma}_N-\Sigma_N),\\
     &\geq \lambda_{min}(\Sigma_N) - ||\hat{\Sigma}_N - \Sigma_N||_{op},
 \end{aligned}    
 \end{equation}
 using $|\lambda_{max}(.)| \leq ||.||_{op}$. \\
 Therefore, 
 \begin{equation}
     P(||\hat{\bar{\beta}}_{obs} - \bar{\beta}_0|| > \epsilon) = P(A)\cdot P(||\hat{\bar{\beta}}_{obs} - \bar{\beta}_0|| > \epsilon\:|\:A) + P(A^c)\cdot P(||\hat{\bar{\beta}}_{obs} - \bar{\beta}_0|| > \epsilon\:|\:A^c),
 \end{equation}
 where $A= \{||\hat{\Sigma}_N - \Sigma_N||_{op} \leq \lambda/2\}$. Hence, we get given $\epsilon > 0$,
 \begin{equation}
 \begin{aligned}
    P(||\hat{\bar{\beta}}_{obs} - \bar{\beta}_0|| > \epsilon) &\leq P(A)\cdot P(||\hat{\bar{\beta}}_{obs} - \bar{\beta}_0|| > \epsilon\:|\:A) + P(A^c), \\
    & \leq P(A)\cdot P(||{\hat{\Sigma}}^{-1}_N||_{op}(||\hat{u}_N -u_N|| + ||\hat{\Sigma}_N- \Sigma_N||_{op}||\bar{\beta}_0||) > \epsilon \:|\: A) + P(A^c),\\
    &\leq P(\frac{2}{\lambda}(||\hat{u}_N - u_N|| + \lambda^{-1}BB'||\hat{\Sigma}_N- \Sigma_N||_{op})>\epsilon\:|\:A) + P(A^c).
 \end{aligned}    
 \end{equation}
 Here, we use the following 
 \begin{equation}
  \begin{aligned}
  \label{eq: population beta bound}
      ||\bar{\beta}_0|| &= ||\Sigma_N^{-1}\cdot u_N||,\\
      &\leq ||\Sigma_N^{-1}||_{op}\cdot||u_N||,\\
      &\leq \lambda^{-1}BB'.
  \end{aligned}   
 \end{equation}
 Using Equation (\ref{eq: vector conv}), (\ref{eq: matrix op norm conv}), we get 
 \begin{equation}
     P(||\hat{\bar{\beta}}_{obs} - \bar{\beta}_0|| > \epsilon) \rightarrow 0. 
 \end{equation}
 as $N\rightarrow \infty$. 
\qed

\subsection{Proof of Theorem \ref{theorem: estd asymptotic validity}}
\label{proof: estd asymptotic validity}

With some abuse of notation, we write $\tilde{r}$ as $r$ for readability. Since $T(Z, \hat{r})$ is uniformly Lipschitz in $r$, we get that given $Z$ 
\begin{equation}
    \begin{aligned}
        |T(Z, \hat{r}) - T(Z, r_0)| \leq C/\sqrt{N} ||\hat{r} - r_0||,
    \end{aligned}
\end{equation}
 for some $C>0$.
 We define $\hat{\bar{\beta}}(Z)$ as the coefficient obtained from $Y(Z) \sim [1,\phi(e(Z))]$, and 
 \begin{equation}
     \hat{\bar{\beta}}^r(Z) = argmin_{\bar{\beta}} \frac{1}{N}\sum_{i=1}^{N}(Y_i(Z_{obs}) - \hat{\beta}_{obs}\cdot \phi(e_{i}(Z_{obs})) + \hat{\beta}_{obs}\cdot \phi(e_i(Z)) - \bar{\beta}\cdot[1,\phi(e_i(Z))])^2,
 \end{equation}
 as the coefficient obtained from refitting. Also, define $r_i = Y_i - E[\tilde{Y}_i|e_i]$, $\hat{r}_i= Y_i(Z) - \hat{\beta}(Z)\cdot\phi(e_i(Z))$, ${r_0}_i = Y_i(Z) - \beta_0\cdot\phi(e_i(Z))$, and $\delta_i = \beta_0\cdot\phi(e_i) - E[\tilde{Y}_i|e_i]$. We get $r_i = {r_0}_i + \delta_i$. We note that $r_i(Z) = r_i(Z')$ under the null $\mathcal{H}_0$. 
 Therefore, given $\epsilon > 0$, 
 \begin{equation}
 \begin{aligned}
     P(|T(Z,\hat{r}) - T(Z, r_0)| > \epsilon) &\leq P(C/\sqrt{N} \;||(\hat{r} - r_0)|| > \epsilon),\\
     &= P(||\hat{r} - r_0|| > \frac{\sqrt{N}\epsilon}{\small C }),\\
     &= P(||(\hat{\beta} - \beta_0)\cdot\phi(e(Z))|| > \frac{\sqrt{N}\epsilon}{\small C }),\\
     &\leq P(||(\hat{\beta} - \beta_0)|| >\frac{\epsilon}{CB'})\xrightarrow{} 0,      
\end{aligned}
 \end{equation}
   as $\hat{\beta} - \beta_0 \xrightarrow{p} 0$. Here, $B' = \sup_{z \in \{0,1\}^N:\: ||z|| \leq \sqrt{1+\kappa +...+ \kappa^k}} \;||[1,\phi(e_i(z))]||$ as defined in Lemma \ref{lemma: consistency}.\\
Hence, for two independent draws $Z, Z_{obs} \sim P(Z)$, we get 
 \begin{equation}
 \label{eq: z prob conv}
     T(Z, \hat{r}+\delta) - T(Z, r_0+\delta) \xrightarrow{p} 0,
 \end{equation}
 and 
 \begin{equation}
 \label{eq: z' prob conv}
     T(Z,\hat{r}_{obs}+\delta' + (\hat{\beta}_{obs} - \hat{\beta}^r(Z))\cdot\phi(e(Z))) - T(Z,r_{0obs}+\delta_{obs}) \xrightarrow{p_{Z,Z_{obs}}} 0,
 \end{equation}
 where $\delta' = \delta_{obs} + (\hat{\beta}^r(Z) - \hat{\beta}_{obs})\cdot\phi(e(Z))$. Let 
 \begin{equation}
     \Delta_{\hat{r}} = T(Z,\hat{r}+\delta) - T(Z_{obs},\hat{r}_{obs} +\delta_{obs}),
 \end{equation}
 and 
 \begin{equation}
     \Delta_{\hat{r}_{obs}} = T(Z,\hat{r}_{obs}+\delta' + (\hat{\beta}_{obs} - \hat{\beta}^r(Z))\cdot\phi(e(Z))) - T(Z_{obs},\hat{r}_{obs} + \delta_{obs}).
 \end{equation}
  Then, 
  \begin{equation}
   \label{eq: delta conv}   
   \Delta_{\hat{r}} - \Delta_{\hat{r}_{obs}} \xrightarrow{P_{Z,Z_{obs}}} 0.
  \end{equation} using Equation (\ref{eq: z prob conv}), (\ref{eq: z' prob conv}), and using $r(z) = r(z_{obs})$. 
 Consider the following inequality, given $\epsilon >0$
 \begin{equation}
     I(\Delta_{\hat{r}} \geq 0) \leq I(\Delta_{\hat{r}_{obs}} \geq -\epsilon) + I(|\Delta_{\hat{r}} - \Delta_{\hat{r}_{obs}}| \geq \epsilon). 
 \end{equation}
 We get
 \begin{equation}
 \label{eq: pval ineq}
     P_Z(\Delta_{\hat{r}} \geq 0) \leq P_Z(\Delta_{\hat{r}_{obs}} \geq -\epsilon) + P_Z(|\Delta_{\hat{r}} - \Delta_{\hat{r}_{obs}}| \geq \epsilon).
 \end{equation}
 We define ${\hat{p}}^{\epsilon}_{obs} = P_Z(\Delta_{\hat{r}_{obs}} \geq -\epsilon)$ and $\hat{p}_{0} = P_Z(\Delta_{\hat{r}} \geq 0)$. Consider
 \begin{align}
     &\{\hat{p}_{0} > \alpha + \epsilon\} \subseteq \{{\hat{p}}^{\epsilon}_{obs} >\alpha \} \cup \{P_Z(|\Delta_{\hat{r}} - \Delta_{\hat{r}_{obs}}| \geq \epsilon) > \epsilon\}.
 \end{align}
 Hence, we get
 \begin{align}
     P_{Z_{obs}}(\hat{p}_{0} > \alpha + \epsilon) &\leq P_{Z_{obs}}({\hat{p}}^{\epsilon}_{obs} > \alpha ) + P_{Z_{obs}}(P_Z(|\Delta_{\hat{r}} - \Delta_{\hat{r}_{obs}}| \geq \epsilon) > \epsilon),\\
      -P_{Z_{obs}}({\hat{p}}^{\epsilon}_{obs} > \alpha ) &\leq -P_{Z_{obs}}(\hat{p}_{0} > \alpha +\epsilon) + P_{Z_{obs}}(P_Z(|\Delta_{\hat{r}} - \Delta_{\hat{r}_{obs}}| \geq \epsilon) > \epsilon),\\
      1-P_{Z_{obs}}({\hat{p}}^{\epsilon}_{obs} > \alpha ) &\leq 1-P_{Z_{obs}}(\hat{p}_{0} > \alpha + \epsilon) + P_{Z_{obs}}(P_Z(|\Delta_{\hat{r}} - \Delta_{\hat{r}_{obs}}| \geq \epsilon) > \epsilon).
\end{align}
This implies
\begin{equation}
\label{eq: pval inequality}
P_{Z_{obs}}({\hat{p}}^{\epsilon}_{obs} \leq \alpha ) \leq P_{Z_{obs}}(\hat{p}_{0} \leq \alpha + \epsilon) + P_{Z_{obs}}(P_Z(|\Delta_{\hat{r}} - \Delta_{\hat{r}_{obs}}| \geq \epsilon) > \epsilon).
\end{equation}
 Consider $ P_{Z_{obs}}(P_Z(|\Delta_{\hat{r}} - \Delta_{\hat{r}_{obs}}| \geq \epsilon) > \epsilon)$. We first show that this term converges to 0 for any $\epsilon > 0$. \\
 Let $X = P_Z(|\Delta_{\hat{r}} - \Delta_{\hat{r}_{obs}}| \geq \epsilon)$. Since $X\geq 0$, using Markov inequality, we get
 \begin{align}
     P_{Z_{obs}}(X > \epsilon) &\leq \frac{E_{Z_{obs}}[X]}{\epsilon},\\
     &= \frac{E_{Z_{obs}}[E_Z[I(|\Delta_{\hat{r}} - \Delta_{\hat{r}_{obs}}| \geq \epsilon)]]}{\epsilon},\\
     &= \frac{E_{Z,Z_{obs}}[I(|\Delta_{\hat{r}} - \Delta_{\hat{r}_{obs}}| \geq \epsilon)]}{\epsilon},\\
     &= \frac{P_{Z,Z_{obs}}(|\Delta_{\hat{r}} - \Delta_{\hat{r}_{obs}}| \geq \epsilon)}{\epsilon},\\
     &\xrightarrow{} 0,
 \end{align}
 using Equation (\ref{eq: delta conv}), for any given $\epsilon > 0$.\\
 Since $T(Z,r) \stackrel{d}{=} T(Z_{obs},r_{obs})$ for $Z,Z_{obs} \sim_{i.i.d.} P(Z)$, we get $P_{Z_{obs}}(\hat{p}_0 \leq \alpha) \leq \alpha$ for all $N$, similar to the proof of Theorem \ref{theorem: oracle method validity}. In Equation (\ref{eq: pval inequality}), we obtain
 \begin{align}
 \label{eq: limsup ineq}
     &\limsup_{N \rightarrow \infty} P_{Z_{obs}}({\hat{p}}^{\epsilon}_{obs} \leq \alpha) \leq \limsup_{N \rightarrow \infty} P_{Z_{obs}}(\hat{p}_{0} \leq \alpha + \epsilon) \leq \alpha + \epsilon.
 \end{align}
 
 Hence, 
 \begin{align}
     \lim_{\epsilon \rightarrow 0}\limsup_{N \rightarrow \infty} P_{Z_{obs}}({\hat{p}}^{\epsilon}_{obs} \leq \alpha) \leq \alpha.
 \end{align}
  Consider $\hat{r}_{obs}^z = \hat{r}_{obs} + (\hat{\beta}_{obs} - \hat{\beta}^r(Z))\cdot\phi(e(Z))$, and 
   \begin{equation}
   \begin{aligned}
     \Delta_{\hat{r}_{obs}} = (T(Z,\hat{r}_{obs}^z+\delta') - T(Z,\hat{r}_{obs}^z)) + (T(Z,\hat{r}_{obs}^z)-T(Z_{obs},\hat{r}_{obs}))\\
     + (T(Z_{obs},\hat{r}_{obs}) - T(Z_{obs},\hat{r}_{obs} + \delta_{obs})).
     \end{aligned}
 \end{equation}
 Then, 
 $\Delta_{\hat{r}_{obs}} \leq C/\sqrt{N}(||\delta'||+||\delta_{obs}||) + (T(Z,\hat{r}_{obs}^z)-T(Z_{obs},\hat{r}_{obs}))$.
 This implies
 \begin{equation}
     \{\Delta_{\hat{r}_{obs}} \geq -\epsilon\} \subseteq \{(T(Z,\hat{r}_{obs}^z)-T(Z_{obs},\hat{r}_{obs})) \geq - C/\sqrt{N}(||\delta'||+||\delta_{obs}||)-\epsilon\}.
 \end{equation}
 Therefore, 
 \begin{equation}
 \begin{aligned}
 P_Z(\Delta_{\hat{r}_{obs}} \geq -\epsilon) &\leq P_Z(T(Z,\hat{r}_{obs}^z)-T(Z_{obs},\hat{r}_{obs}) \geq - C/\sqrt{N}(||\delta'||+||\delta_{obs}||)-\epsilon),
  \end{aligned}   
 \end{equation}
 which implies 
 \begin{equation}
     \{P_Z(T(Z,\hat{r}_{obs}^z)-T(Z_{obs},\hat{r}_{obs}) \geq - C/\sqrt{N}(||\delta'||+||\delta_{obs}||)-\epsilon) \leq \alpha\} \subseteq \{P_Z(\Delta_{\hat{r}_{obs}} \geq -\epsilon) \leq \alpha\},
 \end{equation}
 and we get, 
 \begin{equation}
 \begin{aligned}
    P_{Z_{obs}}(P_Z(T(Z,\hat{r}_{obs}^z)-T(Z_{obs},\hat{r}_{obs}) \geq - C/\sqrt{N}(||\delta'||+||\delta_{obs}||)-\epsilon) \leq \alpha) &\leq\\ P_{Z_{obs}}(P_Z(\Delta_{\hat{r}_{obs}} \geq -\epsilon) &\leq \alpha). 
 \end{aligned}
 \end{equation}
 Consider 
 \begin{equation}
 \begin{aligned}
     C/\sqrt{N}(||\delta'||+||\delta_{obs}||)) &\leq C/\sqrt{N}(2||\delta_{obs}|| + ||\hat{\beta}^r(Z) - \hat{\beta}_{obs}||\cdot||\phi(e(Z))||),\\
     &\leq C(2s+B'||\hat{\beta}^r(Z) - \hat{\beta}_{obs}||). 
 \end{aligned}    
 \end{equation}
 We now find a bound for the term $||\hat{\beta}^r(Z) - \hat{\beta}_{obs}||$. Similar to Lemma \ref{lemma: argmin consistency} excluding the baseline constants, take $\hat{\Sigma}_N = \frac{1}{N}\sum\phi(e_i)\phi(e_i)^T$, and $\Sigma_N = E_Z[\hat{\Sigma}_N]$. We have 
 \begin{equation}
 \begin{aligned}
     \hat{\Sigma}_N\cdot\hat{\beta}^r(Z) &= \frac{1}{N}\sum_i\phi(e_i)(\hat{\beta}_{obs}\cdot\phi(e_i) + \hat{r}_{{obs}_i}),\\
     &= \hat{\Sigma}_N\cdot\hat{\beta}_{obs} + \frac{1}{N}\sum_i\phi(e_i)\hat{r}_{{obs}_i}.
 \end{aligned}    
 \end{equation}
 Hence, 
 \begin{equation}
 \label{eq: beta bound main}
 \begin{aligned}
     \hat{\Sigma}_N\cdot(\hat{\beta}^r(Z)-\hat{\beta}_{obs}) &= \frac{1}{N}\sum_i\phi(e_i)({Y_{obs}}_i - \hat{\beta}_{obs}\cdot\phi({e_{obs}}_i)),\\
     &= \frac{1}{N}\sum_i\phi(e_i)({Y_{obs}}_i - \beta_0\cdot\phi({e_{obs}}_i) + (\beta_0 - \hat{\beta}_{obs})\cdot\phi({e_{obs}}_i)),\\
     &= \frac{1}{N}\sum_i\phi(e_i)(r_i - {\delta_{obs}}_i + (\beta_0 - \hat{\beta}_{obs})\cdot\phi({e_{obs}}_i)).
 \end{aligned}    
 \end{equation}
 Using arguments similar to proof of Lemma \ref{lemma: consistency}, we get under the null $\mathcal{H}_0$
 \begin{equation}
     \frac{1}{N}\sum_i\phi(e_i)r_i - \frac{1}{N}\sum_iE_Z[\phi(e_i)]r_i] \xrightarrow{p} 0,
 \end{equation}
 and the corresponding (population) first-order condition gives
 \begin{equation}
     \frac{1}{N}\sum_iE_Z[\phi(e_i)(Y_i - \beta_0\cdot\phi(e_i))] = 0, 
 \end{equation}
 implying 
 \begin{equation}
     \frac{1}{N}\sum_iE_Z[\phi(e_i)r_i] = \frac{1}{N}\sum_i E_Z[\phi(e_i)\delta_i].
 \end{equation}
 Using Jensen and Cauchy-Schwartz inequality, we have 
 \begin{equation}
  \label{eq: beta bound one}
  ||\frac{1}{N}\sum_iE_Z[\phi(e_i)\delta_i]|| \leq B's. 
 \end{equation}
 Similarly, we have 
 \begin{equation}
  \label{eq: beta bound two}
  ||\frac{1}{N}\sum_i\phi(e_i)\delta_{{obs}_i}|| \leq B's,
 \end{equation}
 and 
 \begin{equation}
 \label{eq: beta bound three}
 ||\frac{1}{N}\sum_i\phi(e_i)\phi(e_{{obs}_i})^T(\beta_0-\hat{\beta}_{obs})|| \leq B'^2||\hat{\beta}_{obs} - \beta_0|| \xrightarrow{p} 0,
 \end{equation}
 using Lemma \ref{lemma: argmin consistency}. We use Equations (\ref{eq: beta bound one}), (\ref{eq: beta bound two}), and (\ref{eq: beta bound three}) in Equation (\ref{eq: beta bound main}). We also use Equation (\ref{eq: lambda lower bdd}) to get $\lambda_{min}(\hat{\Sigma}_N) \geq \lambda/2$ w.p.1, and obtain under the null $\mathcal{H}_0$
 \begin{equation}
  \label{eq: beta bound final}
  ||\hat{\beta}^r(Z) - \hat{\beta}_{obs} ||\leq \frac{4B's}{\lambda} + o_p(1). 
 \end{equation}
 Therefore, 
 \begin{equation}
  \begin{aligned}
    \lim_{\epsilon\rightarrow0}\limsup_{N\rightarrow\infty}P_{Z_{obs}}(P_Z(T(Z,\hat{r}_{obs}^z)-T(Z_{obs},\hat{r}_{obs})& \geq - 2Cs(1+2B'^2/\lambda)-\epsilon) \leq \alpha) \\
    &\leq \lim_{\epsilon\rightarrow0}\limsup_{N\rightarrow\infty}P_{Z_{obs}}(\hat{p}^{\epsilon}_{obs} \leq \alpha),\\
    &\leq \alpha.
\end{aligned}
 \end{equation}
 Take $C' = 2C(1+2B'^2/\lambda)$. 
 \qed

\subsection{Proof of Proposition \ref{prop: test stat lip}}
\label{proof: lipschitz}

With some abuse of notation, we write $\tilde{r}$ as $r$ for readability. Let $r_{obs}(\beta) = Y_p - \bar{Y_p}$ where $Y_p = Y_{obs} - \beta\phi({e_{obs}})$. Similarly, let $r'_{obs}(\beta) = Y_p' - \bar{Y_p'}$ where $Y_p' = Y_{obs} - \beta'\phi(e_{obs})$. We first show that $1/N\cdot \sum_i(Y_{p_i} - \bar{Y_p})^2 \geq 1/N\cdot\sum_i({Y_{p}}_i(\beta_{obs}))^2$. Consider
\begin{equation}
\tilde Y := Y_{obs}-\bar Y_{obs}\mathbf 1,
\qquad
\tilde e := \phi(e_{obs})-\bar{\phi(e}_{obs})\mathbf 1,
\end{equation}
so that
\begin{equation}
Y_p-\bar Y_p\mathbf{1}
=
\tilde Y-\tilde e\beta.
\end{equation}
Hence,
\begin{equation}
\frac{1}{N}\sum_{i=1}^N\bigl(Y_{p_i}-\bar Y_p\bigr)^2
=
\frac{1}{N}\|\tilde Y-\tilde e\beta\|^2.
\end{equation}

Let
\begin{equation}
\beta_{obs}
=
(\tilde e^\top \tilde e)^{-1}\tilde e^\top \tilde Y,
\qquad
r_{obs}:=\tilde Y-\tilde e\beta_{obs}.
\end{equation}
Then,
\begin{equation}
\tilde e^\top r_{obs}=0.
\end{equation}

Now consider
\begin{equation}
\tilde Y-\tilde e\beta
=
r_{obs}-\tilde e(\beta-\beta_{obs}).
\end{equation}
And we get
\begin{equation}
\begin{aligned}
\|\tilde Y-\tilde e\beta\|^2
&=
\|r_{obs}-\tilde e(\beta-\beta_{obs})\|^2 \\
&=
\|r_{obs}\|^2
-2(\beta-\beta_{obs})^\top\tilde e^\top r_{obs}
+(\beta-\beta_{obs})^\top\tilde e^\top\tilde e(\beta-\beta_{obs}) \\
&=
\|r_{obs}\|^2
+(\beta-\beta_{obs})^\top\tilde e^\top\tilde e(\beta-\beta_{obs}).
\end{aligned}
\end{equation}
Since 
\begin{equation}
\frac{1}{N}\|r_{obs}\|^2
=
\frac{1}{N}\sum_{i=1}^N\bigl(Y_{p_{obs},i}-\bar Y_{p_{obs}}\bigr)^2,
\end{equation}
we have
\begin{equation}
\label{eq: var lower bdd}
\operatorname{Var}(Y_p)
=
\operatorname{Var}(Y_{p_{obs}})
+
\frac{1}{N}
(\beta-\beta_{obs})^\top
\tilde e^\top\tilde e
(\beta-\beta_{obs})
\;\ge\;
\operatorname{Var}(Y_{p_{obs}}).
\end{equation}

Let $r = Y(Z) - \beta\cdot\phi(e)$. Then, $r+\delta =_{\mathcal{H}_0} r_{obs}+\delta_{obs}$. Therefore, 
\begin{equation}
 \begin{aligned}
     ||r|| &= ||r_{obs} +\delta_{obs} - \delta||,\\
     &\geq ||r_{obs}|| - ||\delta - \delta_{obs}||,\\
     &\geq \sqrt{N}\:(\sqrt{c_0} - 2s).
 \end{aligned}   
\end{equation}
Consider $G^{(k)}$. We get that $||G^{(k)}||_{op} \leq \Delta^{k} \leq \kappa^{k}$. Then, 
\begin{equation}
\begin{aligned}
 \Big|T_{\beta}(z, r) - T_{\beta}(z, r')\Big| &= \Bigg| \frac{z^TG^{(k)}r}{||r||||z||} - \frac{z^TG^{(k)}r'}{||r'||||z||}\Bigg|,\\
 &= \Bigg|\frac{z^TG^{(k)}r}{||r||||z||} - \frac{z^TG^{(k)}r'}{||r'||||z||}\Bigg|,\\
 &= \Bigg|\frac{(z^TG^{(k)}r - z^TG^{(k)}r')}{||r||||z||} + \frac{z^TG^{(k)}r'}{||z||}\Big(\frac{1}{||r||}- \frac{1}{||r'||}\Big)\Bigg|.
 \end{aligned}
 \end{equation}
 Using the triangle inequality and the Cauchy-Schwarz inequality, we get
 \begin{equation}
 \begin{aligned}
 &\leq  \frac{|z^TG^{(k)}(r - r')|}{||r||||z||} + |z^TG^{(k)}r'|\Big|\frac{||r| - ||r'||}{||r||||r'||||z||}\Big|,\\
 &\leq \frac{||z||||G^{(k)}||_{op}||r - r'||}{||r||||z||} + ||z||||G^{(k)}||_{op}\Big(\frac{||r - r'||}{||r||||z||}\Big),\\
 &=  \frac{2||G^{(k)}||_{op}}{||r||} \Big(||r - r'||\Big).
 \end{aligned}
 \end{equation}
 Using the residual lower bound, we obtain 
 \begin{equation}
 \begin{aligned}
 \Big|T_{\beta}(z, r) - T_{\beta}(z, r')\Big| \leq \frac{2\kappa^k}{(\sqrt{c_0} - 2s)} \Big(\frac{||r - r'||}{\sqrt{N}}\Big).
  \end{aligned}
 \end{equation}
 
Take $C=\frac{2\kappa^k}{(\sqrt{c_0} - 2s)}$.  
\qed

\section{Additional results}
\label{sec: add results}
Without adjusting p-values, the probability of making at least one false-positive rejection increases. In this section, we present results on performing multiple model specification tests within the design-based experimental setup. We also present another example of a modeling assumption used in the literature to capture interference. As first proposed in \cite{cortez2023exploiting}, we keep the notation consistent and denote by $\beta$ the highest degree of the polynomial summand in the model. In this section, we denote the OLS coefficient with $\hat{b}$.

\subsection{Nested multiple specification testing}
\label{sec: nested testing}
Consider the multiple exposure misspecification testing setting described in Hypothesis \ref{hypothesis: multiple exposure}. Procedure \ref{procedure estimated} controls Type I error asymptotically at the chosen significance level of choice for a single randomization exposure specification test, as shown in Theorem \ref{theorem: estd asymptotic validity}. Consider implementing Procedure \ref{procedure estimated} individually for the Hypotheses defined in Hypothesis \ref{hypothesis: multiple exposure}. This naive approach to the multiple testing problem can lead to an inflated Family-Wise Error Rate (FWER). FWER is the probability of making at least one false rejection when performing multiple hypothesis testing at once. \cite{zhang2025multiple} presents results on the construction of nearly independent p-values under the randomization testing design framework. The independent p-values are then combined for testing intersection hypotheses using standard techniques such as Fisher's combination method, or Stouffer's method (\cite{Fishermultiple}; \cite{stouffer1949american}). A more general control of FWER control can be achieved through Hommel's method and other closed testing principle procedures (\cite{holm1979simple}; \cite{marcus1976closed}), though a formal investigation of these for randomization testing procedures is limited. \cite{zhong2024unconditional} considers multiple randomization testing for the depth of the spillover effect under interference, and obtains FWER control for finite samples (see Appendix E in \cite{zhong2024unconditional}). This is obtained through a sequential testing procedure that exploits a nested hypothesis-testing structure. In the same spirit, we consider settings where multiple causal model specification testing frameworks can be written in a nested structure. We, then, see that the FWER in such settings is controlled at the chosen significance level. We first define a nested hypothesis testing structure. 
\begin{definition}
\label{def: nested hyp}(Nested hypothesis testing)
 Consider $\mathcal{H}_{j}$ for $j \in [K]$ to be a set of hypotheses. We say $\{\mathcal{H}_j : j \in [K]\}$ is a nested hypothesis test if for $j \in [K]$
 \begin{equation}
     H_j \implies H_{j'} \quad \forall\;j' \geq j.
 \end{equation}
\end{definition}
For example, Hypothesis \ref{hypothesis: sequential test depth}, where testing for depth of spillover effect is a nested hypothesis test. In the subsequent section, we consider the low-order $\beta$ interactions model in Assumption \ref{assumption: low beta model}. Then, Hypothesis \ref{hypothesis: low beta} is an example of a nested hypothesis test. We develop a sequential testing procedure and show that the FWER is controlled appropriately. Consider
\begin{proced}
\label{procedure: sequential test}
 Consider $(Z_{obs}, Y_{obs})$ from an experimental data with known distribution $P(Z)$, and test statistics $T_j(Z, \hat{r})$ for the nested hypotheses $\{\mathcal{H}_j\}$ for $j \in [K]$.
 \begin{enumerate}
     \item Let $pval_j$ be the output from implementing Procedure \ref{procedure estimated} for the null Hypothesis $\mathcal{H}_j$ with $T_j$ as the test statistic.
     \item Take $j=1$.
     \item If $pval_j \leq \alpha$, reject $\mathcal{H}_j$ and continue to Step 4. Stop otherwise.
     \item Increment $j$ by 1, and repeat Step 3.
 \end{enumerate}
\end{proced}
The above sequential testing procedure controls the FWER by using a nested hypothesis-testing structure. The nested structure allows control of the FWER without adjusting the significance level, helping to retain the method's power. The spillover depth testing framework, as presented in Hypothesis \ref{hypothesis: sequential test depth}, qualifies as a nested hypothesis test in settings where the spillover effect is inversely proportional to geographical distance. Another example of a nested hypothesis test is considered in the subsequent section on testing for $\beta$, as shown in Hypothesis \ref{hypothesis: low beta}, assuming the low-order $\beta$ interactions model (Assumption \ref{assumption: low beta model}). If a lower-complexity model is correctly specified, it would imply that a higher-complexity model is also correctly specified, which falls under our nested testing approach. We now show that Procedure \ref{procedure: sequential test} controls FWER at the appropriate level. 
\begin{theorem}
 \label{thm: seq test FWER}
 Procedure \ref{procedure: sequential test} controls FWER at significance level $\alpha$ asymptotically. That is, given $\alpha$ and $j \in [K]$
 \begin{equation}
     \limsup_{N \rightarrow \infty} P_{Z_{obs}}(\cup_{j'=j}^{K} \cap_{m=1}^{j'}\{pval_{m} \leq \alpha\}| \mathcal{H}_{j'}: j \leq j'\leq K) \leq \alpha.
 \end{equation}
\end{theorem}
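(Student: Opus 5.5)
The argument is the classical fixed-sequence (closed-testing) one for nested hypotheses, combined with Theorem \ref{theorem: estd asymptotic validity} for one hypothesis only. Fix $j \in [K]$ and condition on the event that $\mathcal{H}_{j'}$ holds for all $j \le j' \le K$. By Definition \ref{def: nested hyp}, $\mathcal{H}_j$ implies every $\mathcal{H}_{j'}$ with $j'\ge j$. So the set of true nulls is an upper segment $\{j_0, j_0+1,\dots,K\}$ with $j_0 \le j$, where $j_0$ is the smallest index whose null holds.

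The first step is to reduce the union of events to a single event. Procedure \ref{procedure: sequential test} rejects $\mathcal{H}_{j'}$ only if it has already rejected $\mathcal{H}_1,\dots,\mathcal{H}_{j'-1}$. Hence the event $\cap_{m=1}^{j'}\{pval_m\le\alpha\}$ is decreasing in $j'$, and for every $j'\ge j_0$,
\begin{equation*}
\cap_{m=1}^{j'}\{pval_m\le\alpha\}\subseteq\{pval_{j_0}\le\alpha\}.
\end{equation*}
Taking the union over $j'\ge j$ (or over $j' \ge j_0$, which is larger), the event of at least one false rejection is contained in $\{pval_{j_0}\le\alpha\}$. This is the only place the nested structure is used, and it needs no multiplicity adjustment.

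The second step applies Theorem \ref{theorem: estd asymptotic validity} to the single true null $\mathcal{H}_{j_0}$ with statistic $T_{j_0}$. The standing assumptions (Assumptions \ref{assumption: experiment setup}, \ref{assumption: bdd outcome}, \ref{assumption: bdd degree}, \ref{ass: design matrix pd}, \ref{ass: OLS} for exposure mapping $e_{j_0}$) and the Lipschitz condition on $T_{j_0}$ are assumed to carry over; Proposition \ref{prop: test stat lip} supplies the latter for $T_\beta^{(k)}$. This gives
\begin{equation*}
\lim_{\epsilon\to0}\limsup_{N\to\infty}P_{Z_{obs}}(\hat{pval}_{j_0,\epsilon}\le\alpha)\le\alpha.
\end{equation*}
Since $K$ is finite, the reduction applies uniformly over the finitely many possible $j_0$. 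The conclusion follows by taking $\limsup_N$ and then letting $\epsilon\to 0$.

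The main obstacle is matching the two notions of p-value. Theorem \ref{theorem: estd asymptotic validity} controls only the $\epsilon$-perturbed, sensitivity-corrected p-value $\hat{pval}_\epsilon$, not the raw output of Procedure \ref{procedure estimated}. I would handle this by interpreting $pval_j$ in Procedure \ref{procedure: sequential test} as the corrected $\hat{pval}_{j,\epsilon}$, with the constant $C's$ included. The FWER statement then holds in the same $\lim_{\epsilon\to0}\limsup_N$ sense.

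Alternatively, when $s=0$ one can get a bound without the $\epsilon$-limit, using the intermediate inequality (\ref{eq: limsup ineq}): $\limsup_N P(\hat p^\epsilon_{obs}\le\alpha)\le\alpha+\epsilon$ for every $\epsilon > 0$. Since $\hat{pval}_{\epsilon}$ is monotone in $\epsilon$ and is at least the unperturbed p-value, this yields the bound $\le\alpha$ as stated, after sending $\epsilon\downarrow0$. One further point needs care: the conditioning on the joint null must not change the distribution of $Z_{obs}$. It does not, because each hypothesis is a deterministic statement about the science $\{Y_i(z)\}$, not about $Z$.
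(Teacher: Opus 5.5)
Your main argument is the paper's own. The events $\cap_{m=1}^{j'}\{pval_m\le\alpha\}$ decrease in $j'$, so the union collapses to $\cap_{m=1}^{j}\{pval_m\le\alpha\}$. That event lies inside the rejection event of a single true null, and Theorem~\ref{theorem: estd asymptotic validity} finishes. The paper simply uses $\{pval_j\le\alpha\}$, which is a true null by the conditioning; this spares you the bookkeeping over the possibly $N$-dependent index $j_0$. You are right that Theorem~\ref{theorem: estd asymptotic validity} only controls the corrected, $\epsilon$-perturbed p-value, and only in the $\lim_{\epsilon\to0}\limsup_N$ sense. The paper's one-line citation passes over this mismatch. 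Reading $pval_j$ as $\hat{pval}_{j,\epsilon}$, with the conclusion in that double-limit sense, is the right repair.

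Your `alternatively, when $s=0$' paragraph, however, does not work. Since $\hat{pval}_\epsilon$ is at least the unperturbed p-value, $\{\hat{pval}_\epsilon\le\alpha\}\subseteq\{pval\le\alpha\}$, so $P(\hat{pval}_\epsilon\le\alpha)\le P(pval\le\alpha)$. An upper bound on the left-hand side says nothing about the raw rejection probability on the right. So this route cannot deliver the statement for the uncorrected $pval_m$. Moreover, `sending $\epsilon\downarrow 0$' reintroduces the very limit you claimed to remove.

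What monotonicity actually gives is this: $\limsup_N P(\hat{pval}_\epsilon\le\alpha)$ is nonincreasing in $\epsilon$, so the outer $\lim_{\epsilon\to0}$ is a supremum. Hence the bound $\le\alpha$ holds for each fixed $\epsilon>0$, but not at $\epsilon=0$. Drop that paragraph or restate it in this form.
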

\begin{proof}
    Consider
    \begin{equation}
     \begin{aligned}
      \limsup_{N\rightarrow \infty}FWER &= \limsup_{N\rightarrow \infty}P_{Z_{obs}}(\cup_{j'=j}^{K} \cap_{m=1}^{j'}\{pval_{m} \leq \alpha\}| \mathcal{H}_{j'}: j \leq j'\leq K),\\
      &= \limsup_{N\rightarrow \infty}P_{Z_{obs}}(\cap_{m=1}^{j} \{pval_m \leq \alpha\}\:|\: \mathcal{H}_j: j \in [K]),\\
      &\leq \limsup_{N\rightarrow \infty}P_{Z_{obs}}(\{pval_j \leq \alpha\}\:|\: \mathcal{H}_j: j \in [K]),\\
      &\leq \alpha \qquad \text{using Theorem \ref{theorem: estd asymptotic validity}}.
     \end{aligned}   
    \end{equation}
\end{proof}

\subsubsection{Low $\beta$-order interactions model}
\label{section: example low beta order model}
Exposure mappings encode structural assumptions in the potential outcome model. This gives an abstract representation of the spillover mechanism, while keeping the setup relatively assumption-lean. However, it is not necessary for the exposure mappings' range to have controlled cardinality, provided sufficient dimensionality reduction. \cite{manski2013identification} suggests imposing additional functional assumptions to mitigate this dimensionality problem. \cite{cortez2023exploiting} presents a linear potential outcome model polynomial in treatment, in addition to the neighborhood interference assumption, which constrains the degree of the polynomial terms in the equation. Assuming a low-order interaction model can help build more efficient estimators and accommodate covariates or general experimental designs (e.g., \cite{wang2025covariate}, \cite{eichhorn2024low}). We present the low-order $\ beta$-interaction model below. 

\begin{assumption}(Neighbourhood interference)
\label{Assume: neighborhood interference}
Suppose 
\begin{equation}
    Y_i(Z) = Y_i(Z') \quad Z_{\bar{N_i}} = Z'_{\bar{N_i}}\: \forall\;Z,Z' \in \{0,1\}^N.
\end{equation}   
\end{assumption}
Assumption \ref{Assume: neighborhood interference} restricts the dependence of the potential outcome of a unit on the immediate neighborhood of the unit. With Assumption \ref{Assume: neighborhood interference}, \cite{cortez2023exploiting} proposes a linear model polynomial in terms of dependence on the treatment assignment of neighboring units. The degree of the polynomial is restricted to $\beta$, which has to be pre-fixed by the analyst. The summands comprise all combinatorial subsets of neighbors, with a maximum size of subset $\beta$.

\begin{assumption}(Low $\beta$-order interactions model)
\label{assumption: low beta model}
Consider
    \begin{equation}
        Y_i(Z) = \sum_{S \subseteq \bar{N_i}} c_{i,S}\cdot \prod_{j \in S}z_j. 
    \end{equation}
    Here, $c_{i,S} = 0$ for all $S$ such that $|S|>\beta$. 
\end{assumption}
By convention, we take $\prod_{j\in \phi}z_j = 1$. The subset restrictions run over $\bar{N_i}$ and thus include a linear direct effect in the model. Here, $c_{i,\phi}$ represents a baseline effect. While a lower $\beta$ in the model makes the estimators more efficient, the authors do not provide any criteria for investigating an appropriate $\beta$. Using the model specification procedure proposed in this paper, we provide a data-driven method to test for a value of $\beta$ rather than relying on ad hoc selection. To this end, we formulate a hypothesis-testing framework given the $\beta$ value. In the procedure, we assume that the potential outcome function follows the low-order interaction model with $\beta$. Thus, the functional form of the outcome model is assumed to be true. Misspecification error, if any, arises from a misspecified value of parameter $\beta$. The following framework formulates this problem as a hypothesis testing statement. 
\begin{hypothesis}
\label{hypothesis: low beta}($\beta$ is correctly specified)
    \begin{equation}
        \mathcal{H}_{\beta}: c_{i,S} = 0 \quad \forall\;S\subseteq \bar{N_i}:\; |S| > \beta\: \forall\;i\in[N].
    \end{equation}
    Here, $\beta \in \{1,2,...,\Delta\}$, where $\Delta$ is the maximum degree of the graph.
\end{hypothesis}
We construct a sequential test for $\beta$ within the nested hypothesis testing framework described above. A lower value of $\beta$ facilitates estimation and inference by limiting the number of parameters to be estimated, thereby reducing model dimensionality. However, if $\beta$ is as large as the maximum degree in the graph, we again encounter the curse of dimensionality. This is evident from the combinatorial growth of the number of coefficients $c_{i, S}$ in the model. Since $|S| \leq \beta$ and the coefficients are indexed by subsets $S$, the number of parameters grows at least as fast as $2^{\beta}$, with $\beta$ potentially as large as $|N_i|$. Consequently, having an appropriate $\beta$ is crucial: an underestimated $\beta$ induces model misspecification bias, while an overestimated $\beta$ imposes an unnecessary dimensionality burden. Implementing Procedure \ref{procedure estimated} to test if a given $\beta$ is correctly specified needs specification of a sufficient functional for the null statement.\\
We propose $e= [\Pi_{j \in S}z_j : S \subseteq \bar{N}_i, |S| \leq \beta]$ as the exposure mapping. We assume that the functional form of the low-order $\beta$ interactions model is correct and test the value of $\beta$. As the number of regressors can be much larger than the sample size, owing to combinatorial terms, a direct application of Procedure \ref{procedure estimated} is not feasible. It is an interesting extension to examine whether the sparsity arising from the graph's bounded degree (as described in Assumption \ref{assumption: bdd degree}) can be exploited to develop a model specification test based on high-dimensional regression.  

\end{appendices}
\end{document}